\documentclass[11pt]{article}

\usepackage[numbers]{natbib}
\usepackage{amsmath}
\usepackage{amssymb}
\usepackage{amsthm}
\usepackage{xcolor}
\usepackage{array}
\usepackage{booktabs}
\usepackage{tabularx}
\usepackage{hyperref}
\usepackage{algorithm}
\usepackage{algpseudocode}
\usepackage{enumitem}\usepackage[margin=1in]{geometry}

\hypersetup{colorlinks=true,citecolor=blue,linkcolor=blue,hypertexnames=false}
\allowdisplaybreaks

\newcommand{\paperTitle}{Alphabet-Preserving Lifting for the Log-Rank Conjecture}
\newcommand{\paperAuthor}{Zhao Song\thanks{\texttt{magic.linuxkde@gmail.com}. The author would like to thank Omri Weinstein for useful discussions.}}

\DeclareMathOperator{\rank}{rank}
\newcommand{\E}{\mathbb E}
\newcommand{\R}{\mathbb R}

\theoremstyle{plain}
\newtheorem{theorem}{Theorem}[section]
\newtheorem{lemma}[theorem]{Lemma}
\newtheorem{definition}[theorem]{Definition}
\newtheorem{proposition}[theorem]{Proposition}
\newtheorem{corollary}[theorem]{Corollary}
\theoremstyle{definition}

\begin{document}

\date{\today}
\title{\paperTitle}
\author{\paperAuthor}
\maketitle

\begin{abstract}
For a Boolean communication matrix $M$, let $D(M)$ denote its deterministic
communication complexity and let $r(M):=\rank_{\R}(M)$.  The log-rank
conjecture asks whether $D(M)$ is polynomial in $\log r(M)$.  The best known
general upper bound, due to Sudakov and Tomon~\cite{st25}, is
$D(M)=O(\sqrt{r(M)})$.

On the lower-bound side, G{\"o}{\"o}s, Pitassi, and Watson constructed explicit
matrices satisfying $D(M)=\Omega((\log r(M))^2/(\log\log r(M))^2)$~\cite{gpw18}.
We improve the lower bound to
$D(M)=\Omega((\log r(M))^2/\log\log r(M))$.

Our construction revisits their pointer function over its original non-Boolean
alphabet and lifts it with an alphabet-valued Index gadget, via the multicolor
simulation theorem stated by Roughgarden and Weinstein~\cite{rw16}.  Compared with the quantitatively explicit GPW bound,
the alphabet-preserving lift removes one factor of $\log\log r$.  We also give a
self-contained proof of the multicolor simulation theorem in the parameter
regime required by the construction.

\end{abstract}

%%% This file is the structure for main body content
%%% This file should only contain use \input{xxx}.
%%% TeX files for body contents should be named as:
%%% 01_xxxx.tex
%%% 02_xxxx.tex
%%% ...
%%% 49_xxxx.tex

\section{Introduction}
\label{sec:introduction}

Communication complexity asks how much two separated parties must communicate to evaluate a function of their joint input.  For a Boolean function
$
f\colon X\times Y\to\{0,1\},
$
its communication matrix is $M_f:=(f(x,y))_{x\in X,y\in Y}$.  We write $D(f):=D(M_f)$ for its deterministic two-party communication complexity and
$
r:=r(M_f):=\rank_{\R}(M_f)
$
for its rank over the reals.  All logarithms in this paper are to base two.  The elementary rank bounds are
$
  \log r(M_f)\le D(f)\le r(M_f)+1.
$ 
Indeed, a cost-$c$ deterministic protocol partitions $M_f$ into at most $2^c$ monochromatic rectangles, and the indicator matrices of its $1$-leaves express $M_f$ as a sum of at most $2^c$ rank-one matrices.  Conversely, a Boolean matrix of rank $r$ has at most $2^r$ distinct rows, so Alice can identify her row using at most $r$ bits, after which Bob announces the function value using one further bit.  

The log-rank conjecture of Lov\'asz and Saks~\cite{ls88,ls93} asserts that the first inequality is polynomially tight: there is a universal constant $C$ such that every Boolean matrix $M$ satisfies
$
  D(M)\le (\log r(M))^C.
$ 
The formulation is insensitive to the common choice between $\{0,1\}$-matrices and sign matrices: if $A:=2M-J$, where $J$ is the all-ones matrix, then $|\rank(A)-\rank(M)|\le 1$.  A useful way to summarize the worst case is
$
  \mathcal C(r):=\sup\{D(M):M\text{ is Boolean and }\rank_{\R}(M)\le r\}.
$
Thus the conjecture asks whether $\mathcal C(r)=(\log r)^{O(1)}$.

\paragraph{Upper bounds}

The central structural question is whether low rank forces a large monochromatic rectangle.  Let $\alpha(s)$ be such that every rank-at-most-$s$ Boolean matrix contains a monochromatic rectangle occupying at least a $2^{-\alpha(s)}$ fraction of all entries.  Nisan and Wigderson~\cite{nw95} showed how to convert such a rectangle theorem into a complete protocol; in a convenient dyadic form their reduction gives
\begin{equation}
  \mathcal C(r)=O((\log r)^2+
  \sum_{i=0}^{\lceil\log r\rceil}
  \alpha(\lceil r/2^i\rceil)).
  \label{eq:nw_reduction}
\end{equation}
This reduction, rather than a new numerical record at the time, became the organizing principle for subsequent upper bounds.

Early progress came through graph-theoretic versions of the conjecture.  Work of Kotlov and Lov\'asz~\cite{kl96} controlled the number of distinct neighborhoods in a low-rank graph, and Kotlov~\cite{k97} obtained the constant-factor improvement $\mathcal C(r)\le \log(4/3)r+O(1)$, where $\log(4/3)\approx 0.415$.  This remained the best unconditional general estimate before the first genuinely sublinear bound.  An additive-combinatorial route was developed by Ben-Sasson, Lovett, and Ron-Zewi~\cite{blr14}: assuming the characteristic-two Polynomial Freiman--Ruzsa conjecture, they proved $\mathcal C(r)=O(r/\log r)$.  The required additive-combinatorial conjecture is now a theorem, following the resolution of Marton's conjecture by Gowers, Green, Manners, and Tao~\cite{ggmt25}; the resulting bound is nevertheless superseded by the stronger unconditional results below.

A decisive change came from discrepancy.  Building on the factorization-norm and discrepancy theory of Linial, Mendelson, Schechtman, and Shraibman~\cite{lmss07}, Lovett~\cite{l16} proved that every rank-$r$ Boolean matrix has a monochromatic rectangle of relative area $2^{-O(\sqrt r\log r)}$.  Combining this with Eq.~\eqref{eq:nw_reduction} yields $\mathcal C(r)=O(\sqrt r\log r)$, the first unconditional $o(r)$ bound.

The current record is due to Sudakov and Tomon~\cite{st25}.  If an $m\times n$ Boolean matrix has rank at most $r$ and density $p\le 1/2$, they establish the sparsity-sensitive discrepancy estimate $\operatorname{disc}(M)\ge c mn\min\{p,\sqrt p/\sqrt r\}$.  A density-decrement iteration then produces an all-zero or all-one submatrix with at least $m2^{-O(\sqrt r)}$ rows and $n2^{-O(\sqrt r)}$ columns.
Consequently $\alpha(r)=O(\sqrt r)$, and Eq.~\eqref{eq:nw_reduction} gives
\begin{equation}
  {\mathcal C(r)=O(\sqrt r)}.
  \label{eq:current_upper}
\end{equation}

\paragraph{Lower bounds}

The trivial rank argument gives $D(M)\ge\log r(M)$, and improving this separation has its own substantial history.  The graph formulation asks how large the chromatic number of a graph can be compared with the real rank of its adjacency matrix.  Alon and Seymour~\cite{as89} first disproved the most optimistic rank-coloring inequality by a finite example.  Razborov~\cite{r92} then constructed an asymptotic family with $\chi(G)=\Omega(\rank(A_G)^{4/3})$, which translates into only a constant-factor improvement over the elementary logarithmic communication lower bound.  Raz and Spieker~\cite{rs95} obtained the first superlinear-in-$\log r$ separation; in a rank-only statement, their construction gives $D(M)=\Omega(\log r\log\log\log r)$.

Nisan and Wigderson~\cite{nw95} gave an explicit family for which $D(M)=\Omega((\log r)^{\log_2 3})$, where $\log_2 3\approx 1.585$.  They also recorded Kushilevitz's unpublished six-variable base construction, which improves the exponent to $\log_3 6\approx 1.631$.
O'Donnell et al.~\cite{ostwz14} subsequently realized this construction in the XOR/Fourier-sparsity setting.  These examples are important evidence that any valid log-rank exponent must exceed one, but they remain far below the best upper bound.

Before the present work, the strongest general separation with an explicitly computed polyloglog loss was due to G\"o\"os, Pitassi, and Watson~\cite{gpw18}.  Their lifted pointer-function construction yields, for an explicit family,
\begin{equation}
  D(M)=\Omega({(\log r)^2}/{(\log\log r)^2})
  .
  \label{eq:current_lower}
\end{equation}
The polylogarithmic loss is worth displaying: statements of the result are often abbreviated as a near-quadratic lower bound.  Their work also produced a nearly quadratic separation between deterministic communication and one-sided rectangle partition number. A separation of the same near-quadratic order also follows from the construction of Balodis, Ben-David, G\"o\"os, Jain, and Kothari~\cite{bbgjk23} based on unambiguous DNFs; Hambardzumyan, Lovett, and Shirley~\cite{hls25} record it, with the polyloglog factors suppressed, as the largest known separation. The polyloglog loss is not computed in~\cite{bbgjk23}; in Section~\ref{sec:bbgjk} we trace their chain as written and show that it yields explicit matrices satisfying $D(M)=\Omega((\log r)^2/(\log\log r)^{7})$. Eq.~\eqref{eq:current_lower} therefore remains the sharpest previously known bound.

Combining Eq.~\eqref{eq:current_upper} and Eq.~\eqref{eq:current_lower}, the best previously known general bounds were therefore
\begin{equation*}
 {\Omega}((\log r)^2 / (\log \log r)^2 )
  \;\le\; \mathcal C(r)
  \;\le\; O(\sqrt r).
\end{equation*}
The lower bound is polylogarithmic in $r$, while the upper bound is exponential in the log-rank.  In particular, no unconditional $o(\sqrt r)$ upper bound is currently known, and no superpolynomial-in-$\log r$ lower bound is known.

\paragraph{Equivalent formulations}

Let $\chi_{\mathrm{part}}(M)$ denote the minimum number of disjoint monochromatic rectangles that partition all entries of $M$.  Protocol leaves give
$
  \log\chi_{\mathrm{part}}(M)\le D(M),
$
and Aho, Ullman, and Yannakakis~\cite{auy83} proved the converse simulation
$
  D(M)=O((\log\chi_{\mathrm{part}}(M))^2).
$
Hence the log-rank conjecture is equivalently a quasipolynomial relation between real rank and monochromatic rectangle partition number.  It is important here to distinguish this two-sided partition number from the one-sided quantity $p(M)$, the minimum number of disjoint all-one rectangles partitioning the $1$-entries; one always has
$
  r(M)\le p(M)\le 2^{D(M)}.
$

Several recent works have sharpened the structural picture without improving Eq.~\eqref{eq:current_upper}.  Singer and Sudan~\cite{ss22} developed equivalent incidence-geometric and parallel-partition formulations.  Hambardzumyan, Lovett, and Shirley~\cite{hls25} introduced the signed rectangle rank
$
  \operatorname{srr}(M):=\min \{t:M=\sum_{i=1}^t\varepsilon_iR_i,
  \ \varepsilon_i\in\{-1,1\},\ R_i\text{ a rectangle matrix} \}
$
and proved $\operatorname{srr}(M)=O(r(M)\log r(M))$.  They show that log-rank is equivalent to converting such a short signed decomposition into a positive/disjoint rectangle decomposition with only quasipolynomial blowup.  Hunter, Milojevi\'c, Sudakov, and Tomon~\cite{hmst24} extended the large-submatrix phenomenon to low-rank nonnegative and bounded-alphabet matrices and resolved a related disjoint-pairs conjecture.  Balla, Hambardzumyan, and Tomon~\cite{bht26} obtained an inverse theorem for bounded $\gamma_2$ norm, clarifying both the reach and the limitations of discrepancy/factorization-norm methods.  Contemporary graph formulations and Rank--Ramsey questions are developed further by Beniamini, Linial, and Shraibman~\cite{bls26}. Table~\ref{tab:log_rank_bounds} summarizes the general bounds.

\begin{table}[H]
\centering
\caption{Major general results on the deterministic log-rank problem. Upper bounds are universal, except for the explicitly marked conjectural row. In the lower panel, the 1982 row is universal, the 1989 row is a finite counterexample, and the remaining rows are asymptotic constructions.}
\label{tab:log_rank_bounds}
\begingroup
\normalsize
\setlength{\tabcolsep}{3.5pt}
\renewcommand{\arraystretch}{1.12}
\begin{tabularx}{\linewidth}{@{}
  >{\centering\arraybackslash}p{0.065\linewidth}
  >{\raggedright\arraybackslash}X
  >{\centering\arraybackslash}p{0.20\linewidth}
  >{\raggedright\arraybackslash}p{0.305\linewidth}
@{}}
\toprule
\multicolumn{4}{@{}l}{\textbf{Upper bounds}} \\
\addlinespace[1pt]
\textbf{Year} & \textbf{Authors} & \textbf{Refs} & \textbf{Bound} \\
\midrule
1982 & Mehlhorn and Schmidt & \cite{ms82} & $\le r+1$ \\
1988 & Lov\'asz and Saks & \cite{ls88,ls93} & Conjectured $(\log r)^{O(1)}$ \\
1995 & Nisan and Wigderson & \cite{nw95} & $O(\log^2r+\sum_i\alpha(r/2^i))$ \\
1996 & Kotlov and Lov\'asz & \cite{kl96} & $\le r/2+O(1)$ \\
1997 & Kotlov & \cite{k97} & $\le \log(4/3)r+O(1)$ \\
2014 & Ben-Sasson, Lovett, and Ron-Zewi & \cite{blr14} & $O(r/\log r)$ (conditional on PFR; unconditional via~\cite{ggmt25}) \\
2016 & Lovett & \cite{l16} & $O(\sqrt r\log r)$ \\
2025 & Sudakov and Tomon & \cite{st25} & $O(\sqrt r)$ \\
\midrule
\addlinespace[5pt]
\multicolumn{4}{@{}l}{\textbf{Lower bounds}} \\
\addlinespace[1pt]
\textbf{Year} & \textbf{Authors} & \textbf{Refs} & \textbf{Bound} \\
\midrule
1982 & Mehlhorn and Schmidt & \cite{ms82} & $\ge \log r$ \\
1989 & Alon and Seymour & \cite{as89} & $\chi(G)=32$, $\rank(A_G)=29$ \\
1992 & Razborov & \cite{r92} & $\ge (4/3)\log r-O(1)$ \\
1995 & Raz and Spieker & \cite{rs95} & $\Omega(\log r\log\log\log r)$ \\
1995 & Nisan and Wigderson & \cite{nw95} & $\Omega((\log r)^{\log_2 3})$ \\
1995 & Kushilevitz & \cite{nw95} & $\Omega((\log r)^{\log_3 6})$ \\
2018 & G\"o\"os, Pitassi, and Watson & \cite{gpw18} & $\Omega({(\log r)^2}/{(\log\log r)^2})$ \\
2023 & Balodis, Ben-David, G\"o\"os, Jain, and Kothari & \cite{bbgjk23} & $\Omega({(\log r)^2}/{(\log\log r)^{7}})$; see Section~\ref{sec:bbgjk} \\
2026 & This paper & Theorem~\ref{thm:main_lower_bound} & $\Omega({(\log r)^2}/{\log\log r})$ \\
\bottomrule
\end{tabularx}
\endgroup
\end{table}

\subsection{Our results}
We state our main result as follows.
\begin{theorem}[Main result, informal version of Theorem~\ref{thm:route_two_main}]
\label{thm:main_lower_bound}
There exists an explicit infinite family of total Boolean communication matrices
$(M_s)_{s\ge 2}$ such that, writing
$r_s:=\rank_{\R}(M_s)$,
\begin{equation*}
 D(M_s)
 =
 \Omega({(\log r_s)^2}/{\log\log r_s}).
\end{equation*}
\end{theorem}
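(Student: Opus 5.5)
The plan is to follow the G\"o\"os--Pitassi--Watson template and change only the lifting step, so that the pointer function is never Booleanized. First I would take the GPW pointer function $f\colon\Sigma^{N}\to\{0,1\}$ over its native alphabet. Here each cell carries a value bit together with $O(\log N)$ bits of pointers, so $|\Sigma|=N^{O(1)}$. I would recall from~\cite{gpw18} two properties. The first is that its deterministic query complexity (one query reveals a whole cell) satisfies $D^{dt}(f)=\Omega(N)$, for $N=n^2$ cells arranged as an $n\times n$ grid. The second is that $f$ has a $1$-certificate structure of unambiguous type: there is a family of $1$-certificates, each reading $O(\sqrt N)$ cells, with pairwise conflicting certificates, so that every $1$-input is covered by exactly one of them. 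The key point is that in the non-Boolean model each certificate has size $O(\sqrt N)$ cells, rather than $O(\sqrt N\log N)$ bits. This saved $\log N$ factor is what I expect to remove one $\log\log r$.

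Next I would lift by a $\Sigma$-valued Index gadget $g\colon[m]\times\Sigma^m\to\Sigma$ with $m=N^{C}$ for a large constant $C$. I would invoke the multicolor simulation theorem of~\cite{rw16}, which gives
\[
D(f\circ g^N)=\Omega\bigl(D^{dt}(f)\log m\bigr)=\Omega(N\log N).
\]
For the rank bound, each unambiguous certificate of $f$ reading $k=O(\sqrt N)$ cells lifts to a disjoint union of rectangles. Fixing the alphabet symbol of a gadget cell is a rectangle partition into $m$ rectangles, one per Alice pointer value, and Bob's matrix column is then constrained to the prescribed symbol. Hence a $k$-cell certificate lifts to $m^{k}$ disjoint rectangles. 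Disjointness of the certificates makes these rectangles partition the $1$-entries, so
\[
r(M)\le p(M)\le (\#\text{certificates})\cdot m^{O(\sqrt N)}=2^{O(\sqrt N\log N)}.
\]
The number of certificates is at most $N^{O(\sqrt N)}$ (choices of the certifying cells and their contents), which is already absorbed into this bound. Thus $\log r=O(\sqrt N\log N)$, giving $N=\Theta((\log r)^2/(\log\log r)^2)$. Then $D=\Omega(N\log N)=\Omega((\log r)^2/\log\log r)$. Explicitness and totality are inherited from the construction.

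The main obstacle is the simulation theorem in this regime. The alphabet is polynomially large, $|\Sigma|=N^{O(1)}$, and the gadget must be alphabet-valued while the gadget size stays only polynomial in $N$, since a larger $m$ would inflate $\log r$. I would prove it by adapting the thickness/density-restoring argument of GPW and RW. Maintain a rectangle $X\times Y$ in which Alice's pointer distribution has high min-entropy on the unqueried blocks. Whenever a block loses too much entropy, query that cell of $f$, charging $\Omega(\log m)$ bits of communication per query. The required uniformity step is that for a dense $X$ and a large $Y\subseteq(\Sigma^m)^N$, the values $g(x_i,y_i)$ on live blocks are nearly uniform over all of $\Sigma$. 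This needs $m\ge|\Sigma|^{O(1)}N^{O(1)}$, which is consistent with $m=N^{C}$. I would verify this uniformity by a two-source extractor / Lindsey-lemma style bound for Index over alphabet $\Sigma$. This is where the constant $C$ gets fixed.
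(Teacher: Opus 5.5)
Your proposal is correct and is essentially the paper's argument: you keep the pointer function over its native alphabet, lift it with a $\Sigma$-valued Index gadget of size polynomial in $N$ via the multicolor simulation theorem to get $\Omega(N\log N)$, and bound $\log r\le\log p(M)=O(\sqrt N\log N)$ by splitting each unambiguous certificate into $m^{k}$ disjoint rectangles (the paper proves the simulation theorem itself via thickness plus an entropy/Pinsker count of coordinates on which a symbol is rare, not an extractor-type bound). The one slip is in the elimination step: an upper bound on $\log r$ gives only $N=\Omega((\log r)^2/(\log\log r)^2)$, not $\Theta$; that direction suffices, since it also yields $\log N=\Omega(\log\log r)$, and the paper packages it via the eventual monotonicity of $x\mapsto x^2/\log x$ after noting $r\to\infty$ from $D\le r+1$.
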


Theorem~\ref{thm:main_lower_bound} also determines the worst case over all
matrices of bounded rank.

\begin{corollary}
\label{cor:main_cr}
For all sufficiently large $r$,
\begin{equation*}
 \mathcal C(r)
 =
 \Omega({(\log r)^2}/{\log\log r}).
\end{equation*}
\end{corollary}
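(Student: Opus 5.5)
The corollary follows from Theorem~\ref{thm:main_lower_bound}. The only work is to fill in the gaps between consecutive ranks $r_s$ of the explicit family, using the monotonicity of $\mathcal C$. By definition $\mathcal C(r)$ is nondecreasing in $r$, since the supremum is taken over a growing set. For any $s$ with $r_s\le r$ we therefore have $\mathcal C(r)\ge D(M_s)\ge c(\log r_s)^2/\log\log r_s$, where $c>0$ is the constant from Theorem~\ref{thm:main_lower_bound}.

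Given a large $r$, I will pick the largest $s$ with $r_s\le r$. It then suffices to show that $\log r_s=\Omega(\log r)$, i.e.\ that consecutive ranks satisfy $\log r_{s+1}=O(\log r_s)$. Indeed, $r<r_{s+1}$ gives $\log r\le \log r_{s+1}\le C\log r_s$. Since $x\mapsto x^2/\log x$ is increasing for large $x$, this yields
\[
\frac{(\log r_s)^2}{\log\log r_s}\ \ge\ \frac{(\log r/C)^2}{\log\log r}\ =\ \Omega\!\left(\frac{(\log r)^2}{\log\log r}\right).
\]
I also need $r_s\to\infty$ so that $s$ exists for large $r$. This holds because $D(M_s)\to\infty$, which itself follows from the family being infinite with growing complexity; alternatively, $\log r_s\ge$ a growing parameter of the construction.

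The main obstacle is the gap condition $\log r_{s+1}=O(\log r_s)$. This depends on the concrete parameters of the construction in Theorem~\ref{thm:route_two_main}. There, I expect $\log r_s=\Theta(s\cdot\mathrm{polylog}(s))$ or similar, coming from the lifted pointer function with $n=\Theta(s^2)$ variables and an Index gadget of size $\mathrm{poly}(s)$. In that case the ratio of consecutive log-ranks is $1+o(1)$.

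If the family is instead indexed sparsely (e.g.\ by powers of two), I would re-index it over all $s$. Alternatively, I would use the matrix-padding trick to interpolate: taking $M_s\otimes$ (or a direct sum with) an identity-type block changes the rank by a controlled factor. For example, $M_s\oplus I_k$ has rank $r_s+k$ and communication at least $D(M_s)$. So for any $r\ge r_s$ one obtains a Boolean matrix of rank exactly $r$ with $D\ge D(M_s)$. This reduces the matter to the polynomial growth of $r_{s+1}$ in $r_s$, which I would verify from the explicit parameter choices.
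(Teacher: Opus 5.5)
Your skeleton (take $s$ maximal with $r_s\le r$, use $\mathcal C(r)\ge D(M_s)$, then exploit $r<r_{s+1}$) matches the paper's. However, the step you single out as the crux, $\log r_{s+1}=O(\log r_s)$, is never proved, and it does not follow from anything the paper establishes. Theorem~\ref{thm:route_two_main}-\ref{part:route_two_rank_upper} is only an \emph{upper} bound, $\log r_s=O(s\log s)$. Your gap condition also needs the matching \emph{lower} bound $\log r_s=\Omega(s\log s)$, i.e.\ a rank lower bound for the lifted matrix. From the stated results, the elementary inequality $r_s\ge D(M_s)-1$ gives only $\log r_s=\Omega(\log s)$. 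So a priori the ratio $\log r_{s+1}/\log r_s$ is bounded only by $O(s)$, which would wreck your final estimate. Saying you ``expect'' $\log r_s=\Theta(s\,\mathrm{polylog}(s))$ leaves exactly this hole. The padding fallback does not address it: $\mathcal C$ is already monotone by definition, and padding with $I_k$ only raises the rank, so it cannot stand in for a lower bound on $r_s$.

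The difficulty comes from routing through Theorem~\ref{thm:route_two_main}-\ref{part:route_two_final}, which is phrased in terms of $r_s$ and so forces you to compare $r_s$ with $r$. The paper bypasses $r_s$ entirely. Part~\ref{part:route_two_comm_lower} gives $\mathcal C(r)\ge D(M_s)=\Omega(s^2\log s)$, and maximality of $s$ with part~\ref{part:route_two_rank_upper} gives $\log r<\log r_{s+1}=O(s\log s)$. Since $\phi(x)=x^2/\log x$ is eventually increasing, $\phi(\log r)\le\phi(Cs\log s)=O(s^2\log s)$, hence $\mathcal C(r)=\Omega(\phi(\log r))$. This uses only the rank upper bound, plus $r_s\to\infty$, which follows from $D(M_s)\le r_s+1$ and part~\ref{part:route_two_comm_lower}.

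If you want to keep your route, you must add the rank lower bound; here is one way. Set cell $(1,1)$ to $(1,\cdot)$ with a pointer to $(1,2)$, the other cells of column $1$ to $(1,\bot)$, and let the chain run $(1,2)\to\cdots\to(1,s)\to\bot$ with fixed pointers. Set every remaining cell to $(0,\bot)$ and leave only the bits of the chain cells free. Restricting the corresponding $y$-blocks to two symbols and fixing everything else yields a submatrix of $M_s$. This submatrix is the Kronecker product of $s-1$ copies of the rank-$K$ matrix $[y_a=0]_{a\in[K],\,y\in\{0,1\}^K}$, so $\log r_s\ge(s-1)\log K=\Omega(s\log s)$.
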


\begin{proof}
Let $r$ be sufficiently large and let $s$ be maximal with $r_s\le r$; this is
well defined because $r_s\to\infty$. Since $\rank_{\R}(M_s)=r_s\le r$, the
definition of $\mathcal C$ gives
$\mathcal C(r)\ge D(M_s)=\Omega(s^2\log s)$, while
$\log r<\log r_{s+1}=O(s\log s)$ by
Theorem~\ref{thm:route_two_main}-\ref{part:route_two_rank_upper}. Eliminating
$s$ as in the proof of
Theorem~\ref{thm:route_two_main}-\ref{part:route_two_final} gives the claim.
\end{proof}

The same family, analysed in the same way, improves two further bounds of
G{\"o}{\"o}s, Pitassi, and Watson~\cite{gpw18}. The first concerns the
one-sided partition number $p(M)$ introduced above.

\begin{theorem}[Partition number, informal version of Corollary~\ref{cor:route_two_partition}]
\label{thm:main_partition}
The family of Theorem~\ref{thm:main_lower_bound} satisfies
\begin{equation*}
 D(M_s)
 =
 \Omega({(\log p(M_s))^2}/{\log\log p(M_s)}).
\end{equation*}
\end{theorem}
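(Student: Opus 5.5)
The plan is to transfer the rank analysis of Theorem~\ref{thm:route_two_main} from $r(M_s)$ to the one-sided partition number $p(M_s)$. The lower bound $D(M_s)=\Omega(s^2\log s)$ is already supplied by Theorem~\ref{thm:route_two_main}. We have $\log p(M_s)\ge\log r(M_s)$, so a large $p$ only helps in the sense that we need an \emph{upper} bound $\log p(M_s)=O(s\log s)$. Once that is in hand, the function $t\mapsto t^2/\log t$ is increasing, and eliminating $s$ exactly as in part~\ref{part:route_two_final} gives
\[
 (\log p)^2/\log\log p=O(s^2\log^2 s/\log s)=O(s^2\log s)=O(D(M_s)),
\]
which is the claim.

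For the upper bound on $p(M_s)$, I would reuse the certificate structure that already gives the rank upper bound in part~\ref{part:route_two_rank_upper}. In GPW-style arguments, the rank bound is proved by exhibiting a partition of the $1$-entries of $M_s$ into rectangles. Each $1$-input of the pointer function has a short unambiguous certificate: the pointer chain, the marked column and the values it reads. The map from $1$-inputs to certificates is injective in the sense that distinct certificates are inconsistent. Lifting with the alphabet-valued Index gadget, each certificate of $q$ alphabet-valued coordinates becomes a rectangle. It is specified by choosing the gadget pointers (Alice's indices) for those $q$ coordinates and fixing the corresponding addressed alphabet cells. The certificates are mutually exclusive, so these rectangles are disjoint, and they cover exactly the $1$-entries.

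The count then mirrors the rank count. The number of rectangles is at most the number of certificates times the number of gadget index choices per certified coordinate, raised to the power $q$. So
\[
 \log p(M_s)\le q\cdot O(\log(\text{input length})+\log(\text{alphabet size})+\log(\text{gadget size}))=O(s\log s).
\]
If part~\ref{part:route_two_rank_upper} was in fact proved via $r\le p$ with this same partition, the corollary is immediate from that proof, and I would simply point to it.

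The main obstacle is confirming that the rank upper bound was indeed obtained through a disjoint $1$-rectangle partition rather than, for example, a polynomial-degree or sparsity argument that bounds rank but not $p$. If it was not, I would build the partition directly. The step needing care is disjointness. It requires the pointer function's $1$-certificates to be unambiguous over the non-Boolean alphabet, meaning that two valid certificates conflict on some coordinate. After lifting, that conflict must also force a conflict on either Alice's pointer or the addressed cell. This holds because if two rectangles fix the same Alice pointer on a coordinate, they both fix the same addressed cell of Bob's, and the values they assign to it differ.
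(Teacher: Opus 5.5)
Your proposal is correct and follows the paper's proof. The disjoint rectangles you build from the lifted unambiguous certificates are exactly those of Lemma~\ref{lem:route_two_up}-\ref{part:route_two_partition}, giving $p(M_s)=\chi_1(F_s)\le(qK)^{2s-1}$ and hence $\log p(M_s)=O(s\log s)$; there $q=|\Sigma_s|$ and the width is $t=2s-1$ (what you called $q$), the paper bounds the number of certificates by $q^{t}$ via a measure argument, and the rank bound of Theorem~\ref{thm:route_two_main}-\ref{part:route_two_rank_upper} is indeed derived from this same partition, as you suspected. The elimination of $s$ is then that of part~\ref{part:route_two_final}, where $\rank_{\R}(M_s)\le p(M_s)$ serves only to ensure $\log p(M_s)\to\infty$ so that monotonicity of $x^2/\log x$ applies---which is the real role of your inequality $\log p\ge\log r$.
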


Yannakakis~\cite{y91} proved the best known general upper bound
$D(M)=O((\log p(M))^2)$ for every Boolean matrix $M$.

Since $r(M)\le p(M)$, Theorem~\ref{thm:main_partition} implies
Theorem~\ref{thm:main_lower_bound}. The second bound concerns the clique
versus independent set problem. For a graph $G$, let $\operatorname{CIS}_G$
denote the communication problem in which Alice receives a clique $x$ of $G$,
Bob receives an independent set $y$ of $G$, and the output is
$|x\cap y|\in\{0,1\}$.

\begin{theorem}[Clique versus independent set, informal version of Corollary~\ref{cor:route_two_cis}]
\label{thm:main_cis}
There exists an explicit infinite family of graphs $(G_s)_{s\ge 2}$, where
$G_s$ has $n_s$ vertices, such that
\begin{equation*}
 D(\operatorname{CIS}_{G_s})
 =
 \Omega({(\log n_s)^2}/{\log\log n_s}).
\end{equation*}
\end{theorem}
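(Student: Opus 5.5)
The plan is to follow the standard reduction of G\"o\"os, Pitassi, and Watson~\cite{gpw18} from one-sided partition number to clique versus independent set, applied to the family $(M_s)$ of Theorem~\ref{thm:main_partition}. Yannakakis~\cite{y91} observed that a $1$-rectangle partition of a Boolean matrix $M$ of size $t:=p(M)$ yields a graph $G_M$ on $t$ vertices, one vertex per rectangle, with two rectangles adjacent iff they share a row. For a row $x$ of $M$, the set of rectangles containing $x$ is a clique. For a column $y$, the set of rectangles containing $y$ is an independent set, since rectangles sharing both a row and a column would overlap, contradicting disjointness. Then $M(x,y)=1$ iff exactly one rectangle contains $(x,y)$, i.e.\ iff $|C_x\cap I_y|=1$; otherwise the intersection is empty. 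Hence $M$ reduces to $\operatorname{CIS}_{G_M}$ with no communication overhead: Alice maps $x\mapsto C_x$ and Bob maps $y\mapsto I_y$. Therefore $D(\operatorname{CIS}_{G_M})\ge D(M)$.

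Step one: take $G_s:=G_{M_s}$, built from an explicit $1$-rectangle partition of $M_s$ of size $p(M_s)$ (or any explicit partition of comparable size). Explicitness needs care. The partition number is a minimum, so I would instead use the explicit partition that the proof of Theorem~\ref{thm:main_partition} constructs. Presumably this comes from the $1$-certificates of the lifted pointer function: each unambiguous certificate of the outer function, composed with the Index gadget's rectangles, gives a rectangle. Its size $P_s$ satisfies $\log P_s=O(s\log s)$, matching the rank bound in Theorem~\ref{thm:route_two_main}. Set $n_s:=P_s$.

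Step two: combine $D(\operatorname{CIS}_{G_s})\ge D(M_s)=\Omega(s^2\log s)$ with $\log n_s=O(s\log s)$. Eliminating $s$ as in Theorem~\ref{thm:route_two_main}-\ref{part:route_two_final} gives $s=\Omega(\log n_s/\log\log n_s)$. Hence $s^2\log s=\Omega((\log n_s)^2/\log\log n_s)$, since $\log s=\Theta(\log\log n_s)$.

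The main obstacle is step one. I need the upper bound $\log P_s=O(s\log s)$ for an explicit unambiguous certificate partition, rather than just the bound on $p(M_s)$. I would verify this by counting: the number of unambiguous $1$-certificates of the outer pointer function times the number of gadget rectangles per certificate. That count should be $2^{O(s\log s)}$, using that certificates have size $O(s)$ over an alphabet of size $\mathrm{poly}(s)$.
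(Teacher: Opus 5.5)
Your proposal is correct and follows the paper's proof of Corollary~\ref{cor:route_two_cis}: the paper takes the explicit disjoint rectangle family $\mathcal R$ from the proof of Lemma~\ref{lem:route_two_up}, builds the same row-sharing graph on it, and maps Alice's row to the clique of rectangles meeting it and Bob's column to the independent set of rectangles meeting it. It then bounds $n_s=|\mathcal R|\le (qK)^{2s-1}$ by your certificate-times-gadget count and eliminates $s$ as in Theorem~\ref{thm:route_two_main}-\ref{part:route_two_final}; in that elimination only $\log s=\Omega(\log\log n_s)$ is needed, and it follows from your $s=\Omega(\log n_s/\log\log n_s)$.
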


The corresponding bounds of~\cite{gpw18} carry a loss of
$(\log\log p(M))^2$ and of $(\log\log n)^2$, in place of the single
logarithmic factors above. Every $n$-vertex graph admits a clique
versus independent set protocol of cost $O((\log n)^2)$~\cite{y91}, and hence
Theorem~\ref{thm:main_cis} is optimal up to that single factor.

\paragraph{Notation.}
For a positive integer $n$, let $[n]:=\{1,\ldots,n\}$. For a Boolean communication problem $F$ with communication matrix $M_F$, write $\mathsf{P}^{\mathsf{cc}}(F):=D(M_F)$. For a function $f\colon\Sigma^n\to\{0,1\}$, let $\mathsf{P}^{\mathsf{dt}}(f)$ denote its deterministic decision-tree complexity, where each query reveals one symbol, and let $\mathsf{UP}^{\mathsf{dt}}(f)$ denote the minimum width of an unambiguous family of $1$-certificates.

\section{Proof overview}\label{sec:proof_overview}

\subsection{Summary of GPW} G{\"o}{\"o}s, Pitassi, and Watson~\cite{gpw18} begin with a pointer function $f:\Sigma^n\to\{0,1\}$, where $n=s^2$, satisfying $\mathsf{P}^{\mathsf{dt}}(f)=\Omega(n)$ and $\mathsf{UP}^{\mathsf{dt}}(f)=O(\sqrt n)$. They Booleanize each symbol using $\ell=\Theta(\log n)$ bits and an error-correcting-code decoder, obtaining $\mathsf{P}^{\mathsf{dt}}=\Omega(n\ell)$ and $\mathsf{UP}^{\mathsf{dt}}=O(\sqrt n\ell)$. Their Theorem~3 then lifts by Boolean Index, contributing another factor $\ell$, so $\mathsf{P}^{\mathsf{cc}}=\Omega(n\ell^2)$ and $L:=\log r=O(\sqrt n\ell^2)$. Since $x\mapsto x^2/(\log x)^2$ is increasing for large $x$, the upper bound on $L$ implies $L^2/(\log L)^2=O(n\ell^2)$, and hence $\mathsf{P}^{\mathsf{cc}}=\Omega(L^2/(\log L)^2)=\Omega((\log r)^2/(\log\log r)^2)$.

\subsection{Our approach} We retain the pointer function over its original alphabet. By Lemma~\ref{lem:route_two_pointer_parameters}, with $n=s^2$ it has $\mathsf{P}^{\mathsf{dt}}=\Omega(n)$ and $\mathsf{UP}^{\mathsf{dt}}=O(\sqrt n)$. After padding to $P=4n$ and setting $K:=P^{1000}$, Theorem~\ref{thm:route_two_lifting} applies the alphabet-valued Index gadget directly and gives $\mathsf{P}^{\mathsf{cc}}=\Omega(n\log n)$. Lemma~\ref{lem:route_two_up} converts the symbol certificates into a $1$-rectangle partition and yields $L:=\log r=O(\sqrt n\log n)$. Since $x\mapsto x^2/\log x$ is increasing for large $x$, this implies $L^2/\log L=O(n\log n)$, and therefore $\mathsf{P}^{\mathsf{cc}}=\Omega(L^2/\log L)=\Omega((\log r)^2/\log\log r)$.

\subsection{GPW lift vs alphabet-valued lift}
The improvement comes from preserving the input alphabet of the outer function, not from a stronger communication-lifting factor.  Theorem~3 of G{\"o}{\"o}s, Pitassi, and Watson~\cite{gpw18} takes an outer function with Boolean input alphabet.  Since one cell of the GPW pointer function is a symbol from an alphabet of size polynomial in $n$, their construction first Booleanizes every cell by an error-correcting-code decoder and then applies the Boolean Index lift.  By contrast, Theorem~5.5 of Roughgarden and Weinstein~\cite{rw16} permits the original alphabet-valued input to be lifted directly; Theorem~\ref{thm:route_two_multicolor} gives the specialized statement used here.  Both simulation theorems contribute a factor of order $\log n$ per outer query.  The difference is therefore the Booleanization step that precedes the GPW lift.

To track this distinction, let $n:=s^2$, let $\ell:=\lceil\log n\rceil$, and write $L:=\log r$.  In the GPW construction, Booleanization changes the symbol-level estimates into $\mathsf{P}^{\mathsf{dt}}=\Omega(n\ell)$ and $\mathsf{UP}^{\mathsf{dt}}=O(\sqrt n\ell)$; the subsequent Boolean lift contributes another factor $\ell$, giving communication lower bound $\Omega(n\ell^2)$ and $L=O(\sqrt n\ell^2)$~\cite{gpw18}.  In our construction, Lemma~\ref{lem:route_two_pointer_parameters} is used at the symbol level.  Theorem~\ref{thm:route_two_lifting} and Theorem~\ref{thm:route_two_multicolor} then give communication lower bound $\Omega(n\ell)$ and $L=O(\sqrt n\ell)$, without Booleanizing the input alphabet.

\begin{table}[ht]
\centering
\caption{Parameter comparison between GPW Theorem~3~\cite{gpw18} after Booleanization and the direct alphabet-valued lift supplied by RW Theorem~5.5~\cite{rw16}.  Both lifting theorems contribute a factor of order $\ell$; the extra factor in the GPW row is already present because each outer symbol was replaced by $\Theta(\ell)$ bits.  Here $n=s^2$, $\ell=\Theta(\log n)$, and $L=\log r$.}
\label{tab:gpw_alphabet_comparison}
\begin{tabularx}{\textwidth}{@{}>{\raggedright\arraybackslash}X*{3}{>{\centering\arraybackslash}X}@{}}
\toprule
Construction & CC lower bound & Upper bound on $L$ & After eliminating $n$ \\
\midrule
GPW Theorem~3 after Booleanization
& $\Omega(n\ell^2)$
& $O(\sqrt n\ell^2)$
& $\Omega(L^2/\ell^2)$ \\
RW Theorem~5.5, direct symbols
& $\Omega(n\ell)$
& $O(\sqrt n\ell)$
& $\Omega(L^2/\ell)$ \\
\bottomrule
\end{tabularx}
\end{table}

The direct-symbol route has a communication lower bound smaller by one factor $\ell$, but its upper bound on $L$ is also smaller by one factor $\ell$.  When $n$ is eliminated, the improvement in the rank estimate is squared, whereas the communication lower bound loses only one factor $\ell$.  This leaves a net improvement of one factor $\ell$: GPW obtains $\Omega(L^2/\ell^2)=\Omega((\log r)^2/(\log\log r)^2)$, while Theorem~\ref{thm:route_two_main} gives $\Omega(L^2/\ell)=\Omega((\log r)^2/\log\log r)$.  Thus the improvement comes entirely from avoiding Booleanization of the GPW input alphabet, and not from a larger multiplier in the communication simulation.

\section{The alphabet-preserving GPW lift}
\label{sec:lower_proof}

In this section, we combine the symbol-level GPW pointer function with a multicolor Index lifting theorem to obtain a stronger quantitative separation between deterministic communication complexity and logarithmic rank.  Specifically, we construct an infinite family of total Boolean communication matrices $(M_s)_{s\ge2}$ satisfying
$
  D(M_s)
  =
  \Omega(\frac{(\log \rank_{\R}(M_s))^2}
  {\log\log \rank_{\R}(M_s)}).
$
\subsection{Large-alphabet simulation}

Let $\Sigma$ be an alphabet of size $q$.  A query to a word in $\Sigma^P$ reveals one whole alphabet symbol.  Communication cost, in contrast, is measured in bits.  Define the $q$-ary Index gadget by
\begin{equation}
  \operatorname{Ind}_{K,\Sigma}\colon [K]\times\Sigma^K\to\Sigma,
  \qquad
  \operatorname{Ind}_{K,\Sigma}(a,y):=y_a.
  \label{eq:route_two_index}
\end{equation}

\begin{theorem}[Large-alphabet deterministic simulation]
\label{thm:route_two_lifting}
There is an absolute constant $P_0$ such that the following holds for every $P\ge P_0$. Let $\Sigma$ be an alphabet whose size $q:=|\Sigma|$ satisfies $2\le q\le P$, and set $K:=P^{1000}$. Then every $f\colon\Sigma^P\to\{0,1\}$ satisfies
\begin{equation*}
  \mathsf{P}^{\mathsf{cc}}(f\circ\operatorname{Ind}_{K,\Sigma}^P)
  \ge
  10\mathsf{P}^{\mathsf{dt}}(f)\log P.
\end{equation*}
\end{theorem}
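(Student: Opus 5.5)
Theorem~\ref{thm:route_two_lifting} follows the Raz--McKenzie / G\"o\"os--Pitassi--Watson simulation paradigm, adapted to alphabet-valued Index. Given a deterministic protocol $\Pi$ of cost $c$ for $F:=f\circ\operatorname{Ind}_{K,\Sigma}^P$, we build a decision tree for $f$ that walks down $\Pi$ while maintaining a current rectangle $X\times Y$ with $X\subseteq[K]^P$, $Y\subseteq(\Sigma^K)^P$, a set $I\subseteq[P]$ of already-queried coordinates, and their revealed symbols $z_I\in\Sigma^I$. We maintain two invariants. First, $X\times Y$ is fixed on $I$: for every $(x,y)\in X\times Y$ and $i\in I$, $y_{i,x_i}=z_i$. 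Second, on the free coordinates $J:=[P]\setminus I$, the set $X_J$ is \emph{dense and spread}: its min-entropy deficiency $D_\infty(X_J):=|J|\log K-\log|X_J|$ is controlled, and every marginal $X_{J'}$ with $J'\subseteq J$ has blockwise min-entropy at least $(1-\delta)|J'|\log K$. Bob's side $Y$ needs only to be large, with $D_\infty(Y_J)$ bounded by the communication spent. When the walk reaches a leaf, we use a completeness statement to argue that the leaf's output equals $f(z)$ for every extension $z$ of $z_I$. Consequently the decision tree is correct, and the number of queries is at most $|I|$.

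The key steps are as follows. (1) \emph{Alice sends a bit.} Keep the larger half of $X$; this raises the deficiency by at most one bit. If some set $J'\subseteq J$ now violates blockwise spreadness, take a maximal such $J'$ and fix Alice's coordinates on $J'$ to a value $\alpha$ whose conditional density is at least $2^{-(1-\delta)|J'|\log K}$. Restrict $X$ to the consistent strings. By the standard density-increment calculation, the deficiency on $J\setminus J'$ drops by roughly $\delta|J'|\log K$. (2) \emph{Bob sends a bit.} Keep the larger half of $Y$. (3) \emph{Query the coordinates in $J'$.} After fixing $x_{J'}=\alpha$, the decision tree queries $z_{J'}$. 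We then restrict $Y$ to those $y$ with $y_{i,\alpha_i}=z_i$ for $i\in J'$, and move $J'$ into $I$. Each such restriction loses at most $|J'|\log q\le|J'|\log P$ bits of $Y$-density. This is the only place where the alphabet size enters, and it is harmless because $q\le P\ll K^{\delta}$. (4) \emph{Amortize.} The total deficiency added is at most $c$ plus $|I|\log P$ on Bob's side. On Alice's side, each query reduces the deficiency by $\delta\log K=1000\,\delta\log P$. Hence $|I|\cdot\delta\cdot 1000\log P\le c+O(|I|)$, which gives $|I|\le c/(10\log P)$ once $\delta$ is a fixed small constant and $P\ge P_0$.

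The main obstacle is the \emph{completeness (projection) lemma} used at a leaf. It states that if $X_J$ is blockwise spread and $Y_J$ has deficiency at most $O(c+|I|\log P)\le O(P\log P)$, then for every $z_J\in\Sigma^J$ there exist $x\in X$ and $y\in Y$ with $\operatorname{Ind}(x_i,y_i)=z_i$ for all $i\in J$. Since the leaf is monochromatic, its output must then equal $f(z_I,z_J)$ for all $z_J$. For the Boolean case this is the GPW/RW projection lemma. For $q$-ary symbols we would prove it by a counting argument: fix $z_J$, take $y$ uniform in $Y_J$ and $x$ uniform in $X_J$, and argue that
\[\Pr[\forall i\in J:\ y_{i,x_i}=z_i]>0.\]
We would bound this probability through a Lindsey-type or Fourier argument over $\Sigma^K$, using the characters of $\mathbb Z_q$. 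The required estimate is that for every $y$-set of density $2^{-d}$, the fraction of $x\in X_J$ reading a wrong symbol is small in each block. We expect this to follow by a union bound over subsets $S\subseteq J$, using spreadness to get a bound of the form $\Pr[\text{all of }S\text{ wrong-biased}]\le (q\,2^{d/|S|}K^{-(1-\delta)/2})^{|S|}$. This is small when $K=P^{1000}$ dominates both $q\le P$ and $2^{d/|S|}\le P^{O(1)}$. Getting the exponents to close with the stated constant $10$, given the choice $K=P^{1000}$, is where the care is needed; the large exponent $1000$ leaves ample slack.
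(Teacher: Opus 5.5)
Your step (3) is where the argument breaks. For a fixed address vector $\alpha$, the sets $\{y\in Y: y_{i,\alpha_i}=z_i\text{ for }i\in J'\}$, over $z_{J'}\in\Sigma^{J'}$, partition $Y$. So a loss of $|J'|\log q$ bits is guaranteed only for an \emph{average} answer. That suffices for randomized density-restoring simulations, but a deterministic decision tree must survive the actual answer. For an $\alpha$ chosen from Alice's side alone, the restricted set can be empty.

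Concretely, let $q\ge3$ (as in the application), and let the protocol begin with Bob sending $[y_{1,\alpha_1}=\sigma]$; you keep $Y=\{y:y_{1,\alpha_1}\neq\sigma\}$. Next, let Alice send bits whose larger halves are $X_t=\{x:x_1=\alpha_1\}\cup\{x:x_1\neq\alpha_1,\ x_{-1}\in R_t\}$. Here $R_t\subseteq[K]^{P-1}$ has density $2^{-t}$ and uniform projections to proper subsets of the coordinates $2,\dots,P$ (e.g.\ a congruence condition on $x_2+\cdots+x_P$). Each bit costs at most one bit of deficiency, and $\Pr[X_1=\alpha_1]\approx 2^t/K$. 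A direct check shows that when $2^t$ first exceeds $K^{\delta}$, the only violation of spreadness is $J'=\{1\}$ with the single heavy value $\alpha_1$; all other marginals stay below threshold until $2^t\ge K^{2\delta}$. Your procedure is therefore forced to set $x_1=\alpha_1$. On the answer $z_1=\sigma$, Bob's set becomes empty, so the invariants, the Bob-side accounting in (4), and the leaf argument all collapse. These junk bits can be prepended to any correct protocol, so no rule for choosing $J'$ and $\alpha$ in your framework avoids this. You need a way to commit Alice's coordinate that is good for Bob against every possible answer simultaneously.

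That is exactly what the paper's route supplies. The statement itself is just Theorem~\ref{thm:route_two_multicolor} with $n=P$ and $m=K$, so that $\tfrac1{100}\log K=10\log P$. That theorem keeps Alice's set \emph{thick} rather than spread:
\begin{itemize}
\item Lemma~\ref{lem:route_two_thickness} deletes thin fibers, restoring minimum degree $m^{17/20}$, at a cost of at most one bit. In the example above it immediately discards the fibers over $x_{-1}\notin R_t$, which are the singletons $\{\alpha_1\}$.
\item A query is triggered when an average degree falls below $m^{19/20}$.
\item The queried coordinate is then restricted not to one address but to a random set $U$ of $\lfloor m^{7/20}\rfloor$ addresses, with Bob's block restricted to $V_\sigma(U)$.
\end{itemize}
Thickness makes $U$ hit every fiber, so Alice's projection is unchanged. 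Lemma~\ref{lem:route_two_fiber_hitting} (via Pinsker, at most $3q^2\beta(T)$ addresses $j$ have $\Pr_{y\in T}[y_j=\sigma]<1/(2q)$) then shows that, for every answer $\sigma$, some such $U$ raises Bob's projected deficiency by at most one bit (Lemma~\ref{lem:route_two_projection}). The same projection lemma together with Lemma~\ref{lem:route_two_last_color} also settles the leaf. Your leaf completeness lemma, by contrast, is only conjectured in the proposal; a spread-lemma argument could plausibly supply it, but as written it is not proved.
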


\begin{proof}
Apply Theorem~\ref{thm:route_two_multicolor} with
$n:=P$ and $m:=K=P^{1000}$.
Then
$n=P=K^{1/1000}$ and $q\le P=K^{1/1000}$,
so all hypotheses of Theorem~\ref{thm:route_two_multicolor} hold once $K=P^{1000}\ge m_0$; set $P_0:=\lceil m_0^{1/1000}\rceil$.  
Theorem~\ref{thm:route_two_multicolor} gives
$\mathsf{P}^{\mathsf{cc}}(f\circ\operatorname{Ind}_{K,\Sigma}^P)\ge \frac{1}{100}\mathsf{P}^{\mathsf{dt}}(f)\log K=10\mathsf{P}^{\mathsf{dt}}(f)\log P$.
The last step uses $K=P^{1000}$, and hence $\log K=1000\log P$.
This proves the claim.
\end{proof}

\subsection{Unambiguous certificates for lifted functions}

The next ingredient counts the rectangles induced by symbol certificates.  An unambiguous family of $1$-certificates consists of pairwise disjoint certificate subcubes whose union is $f^{-1}(1)$. For a communication problem $F$, we write $\chi_1(F)$ for the minimum number of pairwise disjoint $1$-monochromatic rectangles whose union is $F^{-1}(1)$; in the notation of Section~\ref{sec:introduction}, $\chi_1(F)=p(M_F)$.

\begin{lemma}
\label{lem:route_two_up}
Let $f\colon\Sigma^P\to\{0,1\}$ admit an unambiguous family of $1$-certificates, each fixing at most $t$ coordinates.  If
$
  F:=f\circ\operatorname{Ind}_{K,\Sigma}^P,
$
then
\begin{enumerate}
\renewcommand{\theenumi}{(\alph{enumi})}
\renewcommand{\labelenumi}{\theenumi}
\item\label{part:route_two_partition}
$\chi_1(F)\le(qK)^t$.
\item\label{part:route_two_rank}
$\log\rank_{\R}(M_F)
\le
t(\log q+\log K)$.
\end{enumerate}
\end{lemma}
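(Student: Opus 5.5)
The plan is the standard certificate-to-rectangle conversion. Let $\mathcal C$ be the given unambiguous family of $1$-certificates. Each $C\in\mathcal C$ is a pair $(S_C,\sigma_C)$ with $S_C\subseteq[P]$, $|S_C|\le t$, and $\sigma_C\in\Sigma^{S_C}$; its subcube is $\{z: z_i=\sigma_C(i)\ \forall i\in S_C\}$. The subcubes are pairwise disjoint and their union is $f^{-1}(1)$. Write an input of $F$ as $(a,y)$ with $a=(a_1,\dots,a_P)\in[K]^P$ and $y=(y^1,\dots,y^P)\in(\Sigma^K)^P$. The inner output is $z(a,y)_i=y^i_{a_i}$.

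For each certificate $C$ and each choice of pointers $\alpha\in[K]^{S_C}$, define the rectangle
\[
R_{C,\alpha}:=\{a: a_i=\alpha_i\ \forall i\in S_C\}\times\{y: y^i_{\alpha_i}=\sigma_C(i)\ \forall i\in S_C\}.
\]
This is a product set, so it is a rectangle. Every $(a,y)\in R_{C,\alpha}$ has $z(a,y)$ in the subcube of $C$, so $F=1$ on it and it is $1$-monochromatic.

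Next I check that these rectangles partition $F^{-1}(1)$.
\begin{itemize}
\item \emph{Coverage.} If $F(a,y)=1$, then $z(a,y)\in f^{-1}(1)$, so $z(a,y)$ lies in the subcube of exactly one certificate $C$. Taking $\alpha:=a|_{S_C}$ gives $(a,y)\in R_{C,\alpha}$.
\item \emph{Disjointness.} Suppose $(a,y)\in R_{C,\alpha}\cap R_{C',\alpha'}$. Then $z(a,y)$ lies in the subcubes of both $C$ and $C'$. Unambiguity forces $C=C'$, and then $\alpha=a|_{S_C}=\alpha'$.
\end{itemize}
This mutual-exclusivity step is the only place where unambiguity is used, and it is the one point needing care; the rest is bookkeeping.

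For the count, the number of rectangles is at most $\sum_{C}K^{|S_C|}\le |\mathcal C|\,K^t$. I would bound $|\mathcal C|\le q^t$ by noting that the certificates can be organized as the leaves of a $q$-ary branching of depth at most $t$. This is the subtle point: a general unambiguous family need not come from a decision tree, so a naive bound on $|\mathcal C|$ in terms of $q$ and $t$ alone is not automatic.

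If that step fails, I would instead count pairs $(S,\sigma)$ with $|S|\le t$, giving $|\mathcal C|\le \binom{P}{\le t}q^t$. That version only yields $\chi_1(F)\le (PqK)^t$. Since $K=P^{1000}$ in the intended application, this still gives $\log \chi_1(F)=O(t\log K)$, which suffices downstream. For the exact claimed bound $(qK)^t$, I would aim to absorb the choice of $S_C$ into the pointer choices, using that the pointer values and the symbols jointly determine the certificate. Concretely, I would bound the total number of pairs $(C,\alpha)$ by charging each to a sequence of at most $t$ (pointer, symbol) pairs in $[K]\times\Sigma$.

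Part (b) follows from part (a). The $1$-rectangles of the partition give $M_F=\sum_{R}\mathbf 1_R$, a sum of rank-one matrices. Hence
\[
\rank_{\R}(M_F)\le \chi_1(F)\le (qK)^t,
\]
and taking logarithms yields $\log\rank_{\R}(M_F)\le t(\log q+\log K)$.
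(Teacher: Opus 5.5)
Your rectangles $R_{C,\alpha}$, the coverage and disjointness checks, and the rank-subadditivity step for part (b) are exactly what the paper does. The gap is the step you flag yourself, the bound $|\mathcal C|\le q^t$, and none of your three routes establishes it.
\begin{itemize}
\item The $q$-ary branching picture is not available for a general unambiguous family, as you note.
\item The fallback $|\mathcal C|\le\sum_{k\le t}\binom{P}{k}q^k$ only gives $\chi_1(F)\le(PqK)^t$, and hence $\log\rank_{\R}(M_F)\le t(\log P+\log q+\log K)$. That is not the stated bound, even if it suffices downstream.
\item The charging of $(C,\alpha)$ to a sequence of at most $t$ (pointer, symbol) pairs is left unspecified, and its natural version is not injective. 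The certificate fixing $z_1=a,z_2=b$ and the one fixing $z_2=a,z_3=b$, with $a\ne b$, have disjoint subcubes. Yet both read $(a,b)$ on their sorted supports, so with the same pointer choices they produce the same sequence.
\end{itemize}

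The missing idea is a one-line volume (Kraft-type) count. It is a second use of unambiguity, contrary to your remark that disjointness of the rectangles is the only place it enters. Under the uniform distribution on $\Sigma^P$, the subcube of $C$ has probability $q^{-|S_C|}\ge q^{-t}$. These subcubes are pairwise disjoint, so $1\ge\sum_{C\in\mathcal C}q^{-|S_C|}\ge|\mathcal C|\,q^{-t}$, i.e.\ $|\mathcal C|\le q^t$. Substituting this into your estimate $\sum_C K^{|S_C|}\le|\mathcal C|K^t$ gives $\chi_1(F)\le(qK)^t$, and part (b) then follows exactly as you wrote. Without disjointness of the certificate subcubes, a width-$t$ family of $1$-certificates can have $\binom{P}{t}q^t$ members (take $f\equiv 1$), so no bound in terms of $q$ and $t$ alone is possible. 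Your worry was therefore well founded, but it is resolved by disjointness, not by any tree structure.
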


\begin{proof}
\noindent{\bf Proof of Part~\ref{part:route_two_partition}.}
Let $\mathcal C$ be the unambiguous certificate family.  Write a certificate as $C=(I_C,z_C)$, where $I_C\subseteq[P]$ and $z_C\in\Sigma^{I_C}$.  The corresponding subcube has uniform measure $q^{-|I_C|}$.  Since these subcubes are pairwise disjoint and $|I_C|\le t$,
$1\ge\sum_{C\in\mathcal C}q^{-|I_C|}\ge|\mathcal C|q^{-t}$.
Therefore
\begin{equation}
  |\mathcal C|\le q^t.
  \label{eq:route_two_certificate_count}
\end{equation}

For a lifted input
$x\in[K]^P$ and $y\in(\Sigma^K)^P$,
let its induced outer word $w(x,y)\in\Sigma^P$ be defined coordinatewise by
\begin{equation}
  w(x,y)_i:=y_i(x_i).
  \label{eq:route_two_induced_word}
\end{equation}
Fix $C=(I_C,z_C)\in\mathcal C$.  For each address vector $u\in[K]^{I_C}$, define
\[
  R_{C,u}
  :=
  \{x\in[K]^P:x_i=u_i\text{ for every }i\in I_C\}
  \times
  \{y\in(\Sigma^K)^P:y_i(u_i)=(z_C)_i\text{ for every }i\in I_C\}.
\]
This is a combinatorial rectangle.  If $(x,y)\in R_{C,u}$, then Eq.~\eqref{eq:route_two_induced_word} gives
$w(x,y)_i=(z_C)_i$ for every $i\in I_C$.
Thus $w(x,y)$ extends the accepting certificate $C$.  In particular, $F(x,y)=1$, and $R_{C,u}$ is $1$-monochromatic.

We next verify coverage and disjointness.  If $F(x,y)=1$, then $w(x,y)\in f^{-1}(1)$ and therefore extends a unique certificate $C\in\mathcal C$.  Taking $u=x|_{I_C}$ places $(x,y)$ in $R_{C,u}$, so the rectangles cover every $1$-input.  Conversely, suppose that $(x,y)$ belongs to both $R_{C,u}$ and $R_{C',u'}$.  Then $w(x,y)$ extends both $C$ and $C'$.  Unambiguity gives $C=C'$, and Alice's address vector then gives
$u=x|_{I_C}=u'$.
Thus the rectangles are pairwise disjoint.

A certificate $C$ contributes exactly $K^{|I_C|}$ rectangles. Therefore
\[
  \chi_1(F)
  \le
  \sum_{C\in\mathcal C}K^{|I_C|}
  \le
  |\mathcal C|K^t
  \le
  (qK)^t.
\]
The first step follows because the rectangles constructed above form a
disjoint $1$-monochromatic partition of $F^{-1}(1)$. The second step uses
$|I_C|\le t$ for every $C\in\mathcal C$, and the third step follows from
Eq.~\eqref{eq:route_two_certificate_count}.
This proves Lemma~\ref{lem:route_two_up}-\ref{part:route_two_partition}.

\medskip
\noindent{\bf Proof of Part~\ref{part:route_two_rank}.}
Because the rectangles form a disjoint partition of the $1$-entries, the communication matrix is the sum of their indicator matrices:
$M_F=\sum_{C\in\mathcal C}\sum_{u\in[K]^{I_C}}M_{R_{C,u}}$.
Every nonempty rectangle indicator is an outer product of two nonzero Boolean vectors and hence has real rank one. Applying the same rank-subadditivity argument to a minimum disjoint $1$-monochromatic rectangle partition gives the first step in
$\rank_{\R}(M_F)\le\chi_1(F)\le(qK)^t$,
and Lemma~\ref{lem:route_two_up}-\ref{part:route_two_partition} gives the second step.
Taking logarithms proves Lemma~\ref{lem:route_two_up}-\ref{part:route_two_rank}.
\end{proof}

\subsection{The GPW pointer function}

Following the notation of G\"o\"os, Pitassi, and Watson~\cite{gpw18}, we write $\mathsf{P}^{\mathsf{dt}}(f)$ for deterministic decision-tree complexity and $\mathsf{UP}^{\mathsf{dt}}(f)$ for the minimum width of an unambiguous family of $1$-certificates, the width of a family being the maximum number of coordinates fixed by any certificate in it.

Let $s\ge2$ be the side length of the GPW pointer grid and set
$N:=s^2$.
The symbol alphabet in~\cite[Section~2.1]{gpw18} is
\begin{equation}
  \Sigma_s
  :=
  \{0,1\}\times([s]\times[s]\cup\{\bot\}),
  \qquad
  q:=|\Sigma_s|=2(N+1).
  \label{eq:route_two_alphabet}
\end{equation}
Write the symbol in cell $(i,j)$ as
$(b_{i,j},p_{i,j})\in\Sigma_s$.
Order the cells of each column by their row index. For a column $j$ containing at least one nonnull pointer, define its pointer chain as follows: the chain starts at the cell targeted by the first nonnull pointer of column $j$, namely the pointer $p_{i,j}$ with $i$ minimal among the nonnull pointers of that column; from a cell with a nonnull pointer the chain moves to the targeted cell, and at a cell with a null pointer the chain ends. The cells reached by the chain are the cells visited by this process; this set is finite even when the pointers form a cycle, in which case the process revisits cells without visiting new ones. The pointer function $f_s\colon\Sigma_s^N\to\{0,1\}$ accepts precisely when there is a column $j$ with the following properties:
\begin{enumerate}
  \item Every bit $b_{i,j}$ in column $j$ equals one, and at least one pointer in that column is nonnull.
  \item Starting from the first nonnull pointer in column $j$, following the pointer chain reaches exactly one cell in every other column.
  \item Every cell reached by the pointer chain has bit value zero.
\end{enumerate}
The accepting column is unique because the pointer chain exhibits a zero in every other column.

Lemma \ref{lem:route_two_pointer_parameters} is a self-contained variant of the symbol-level pointer-function bounds from the “Actual gap example” in \cite[Section~2.1]{gpw18}. Our definition uses a slightly different convention for the terminal pointer.

\begin{lemma}[GPW pointer parameters, \cite{gpw18}]
\label{lem:route_two_pointer_parameters}
The function $f_s$ satisfies
\begin{enumerate}
\renewcommand{\theenumi}{(\alph{enumi})}
\renewcommand{\labelenumi}{\theenumi}
\item\label{part:route_two_depth}
$
  \mathsf{P}^{\mathsf{dt}}(f_s)=N,
$
\item\label{part:route_two_unambiguous}
$
  \mathsf{UP}^{\mathsf{dt}}(f_s)\le 2s-1.
$
\end{enumerate}
\end{lemma}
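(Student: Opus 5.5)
The statement has two parts. Part~\ref{part:route_two_depth} needs a matching upper and lower bound. Part~\ref{part:route_two_unambiguous} needs an explicit unambiguous certificate family. I will treat them separately.

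For Part~\ref{part:route_two_depth}, the upper bound $\mathsf{P}^{\mathsf{dt}}(f_s)\le N$ is trivial: query all cells. For the lower bound I will give an adversary that keeps $f_s$ undetermined until all $N$ cells have been queried. When a cell is queried and it is \emph{not} the last unqueried cell of its column, the adversary answers $(1,\bot)$. When the queried cell completes its column, the adversary answers with bit $0$ and a pointer to the zero cell of the previously completed column, or $\bot$ if no column was completed before. Thus the zero cells of completed columns form a backward-linked list, and each completed column contains exactly one zero and otherwise $(1,\bot)$.

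Suppose $N-1$ queries have been answered and the unqueried cell $c$ lies in column $j$. Then every other column is complete, and column $j$ consists of $(1,\bot)$ entries together with $c$.
\begin{itemize}
\item Setting $c:=(0,\bot)$ makes every column contain a zero, so no column can satisfy condition~1, and $f_s=0$.
\item Setting $c:=(1,p)$, with $p$ pointing to the zero cell of the most recently completed column, makes column $j$ all ones with a unique nonnull pointer, namely that of $c$.
\end{itemize}
In the second case the chain from $c$ visits, via the backward list, exactly the $s-1$ zero cells, one in each other column, and then stops at $\bot$. All these cells have bit $0$, so $f_s=1$. Hence no decision tree of depth below $N$ computes $f_s$. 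The main care needed here is checking that the chain never re-enters column $j$ and visits no column twice. This holds by construction: the list links distinct completed columns, and its tail pointer is $\bot$.

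For Part~\ref{part:route_two_unambiguous}, take an accepting input $w$ with its unique accepting column $j$, which is unique by the remark after the definition. Let the certificate $C_w$ fix all $s$ cells of column $j$ together with the $s-1$ cells reached by its pointer chain, with their full symbols. These chain cells have bit $0$ while column $j$ has all bits $1$, so they lie outside column $j$, one per other column. This gives width $s+(s-1)=2s-1$.

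To see that $C_w$ is a $1$-certificate, observe that any $w'$ agreeing with $w$ on these cells has the same column-$j$ contents, hence the same first nonnull pointer. Following pointers only ever reads fixed cells, so $w'$ has the same chain and the same bits on it, and therefore $f_s(w')=1$. Every accepting input extends its own certificate, so the family covers $f_s^{-1}(1)$.

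For unambiguity, suppose $w'$ extends both $C_w$ and $C_{w''}$. Then $w'$ is accepted via the columns of both certificates, so uniqueness of the accepting column forces them to coincide. The fixed column contents and the deterministically followed chain then force identical fixed cell sets and values, so $C_w=C_{w''}$ and the subcubes are pairwise disjoint.
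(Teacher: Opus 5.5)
Your proposal is correct and follows the paper's proof essentially step for step. It uses the same critical-query adversary, which answers $(1,\bot)$ off the critical cells and builds a backward-linked list of zero cells, and the same width-$(2s-1)$ certificate that fixes column $j$ together with its $s-1$ chain cells. The one step you leave implicit, and which the paper states, is that any shorter adversary transcript extends to an $(N-1)$-query transcript; hence both completions remain consistent with it, and a tree stopping early is also wrong on one of them.
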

\begin{proof}
For the unambiguous upper bound, a certificate specifies the accepting column $j$.  It fixes the $s$ symbols in that column, thereby verifying that all their bit values are one and identifying the first nonnull pointer.  It then fixes the $s-1$ symbols reached by following the pointer chain through the remaining columns.  Thus the certificate fixes at most
$
  s+(s-1)=2s-1
$
symbols.  Every queried symbol is a complete pair in $\Sigma_s$, so the certificate fixes both fields even when verification uses only one field. Every extension of such a certificate is accepting: the chain computation of column $j$ reads only fixed symbols, since the first nonnull pointer of column $j$ is determined by the $s$ fixed column symbols and every subsequent step reads the pointer of a previously fixed chain cell. Hence the cells reached on any extension are exactly the $s-1$ fixed chain cells, properties 1--3 depend only on fixed symbols, and every extension accepts via column $j$. On an accepting input, the column $j$ is unique, the first nonnull pointer in that column is unique, and every subsequent cell is determined by a previously fixed pointer.  Hence exactly one certificate accepts the input.  This proves Lemma~\ref{lem:route_two_pointer_parameters}-\ref{part:route_two_unambiguous}.

For the deterministic lower bound, call a query critical if it asks for the last unqueried cell in its column.  The adversary answers every noncritical query by $(1,\bot)$.  It answers the first critical query by $(0,\bot)$.  At every later critical query, it answers $(0,p)$, where $p$ points to the cell of the preceding critical query.

Suppose that only one of the $N$ cells remains unqueried.  Every completed column contains exactly one critical zero-cell, while all previously queried cells in the remaining column equal $(1,\bot)$.  There are two completions consistent with the entire transcript.  Setting the last cell to $(0,\bot)$ leaves a zero in every column, so the function value is zero.  Alternatively, set the last cell to $(1,p)$, where $p$ points to the most recent critical cell.  The remaining column is then the unique all-one column, and its only nonnull pointer starts the reverse chronological chain of critical cells.  This chain visits one zero-cell in every other column and therefore makes the function value one.  Thus the value remains undetermined after $N-1$ queries, so
$
  \mathsf{P}^{\mathsf{dt}}(f_s)\ge N.
$
Every shorter transcript produced by the adversary can be continued to such an $(N-1)$-query transcript, so it also has both a zero-completion and a one-completion.  Querying all $N$ cells gives the matching upper bound.  This proves Lemma~\ref{lem:route_two_pointer_parameters}-\ref{part:route_two_depth}.
\end{proof}

\subsection{Padding and lifting}

Set
$P:=4N$ and $K:=P^{1000}$.
Fix any symbol $a_s\in\Sigma_s$, and extend $f_s$ to
\[
  \widetilde f_s\colon\Sigma_s^P\to\{0,1\},
  \qquad
  \widetilde f_s(z_1,\ldots,z_P):=f_s(z_1,\ldots,z_N).
\]
The upper bound
$\mathsf{P}^{\mathsf{dt}}(\widetilde f_s)\le\mathsf{P}^{\mathsf{dt}}(f_s)$
follows by ignoring the dummy coordinates.  Conversely, restricting every dummy coordinate to $a_s$ converts any decision tree for $\widetilde f_s$ into one for $f_s$.  Hence
\begin{equation}
  \mathsf{P}^{\mathsf{dt}}(\widetilde f_s)=N.
  \label{eq:route_two_padded_depth}
\end{equation}
The certificate family proving Lemma~\ref{lem:route_two_pointer_parameters}-\ref{part:route_two_unambiguous} leaves all dummy coordinates free, and therefore
\begin{equation}
  \mathsf{UP}^{\mathsf{dt}}(\widetilde f_s)\le2s-1.
  \label{eq:route_two_padded_up}
\end{equation}
Moreover, Eq.~\eqref{eq:route_two_alphabet} gives
$q=2(N+1)\le4N=P$.
Here the inequality uses $N=s^2\ge1$, and the last equality follows from $P:=4N$.
Thus all hypotheses of Theorem~\ref{thm:route_two_lifting} hold.

\begin{definition}[Lifted GPW communication problem]
\label{def:route_two_function}
Define the total Boolean communication problem
\[
  F_s
  :=
  \widetilde f_s\circ\operatorname{Ind}_{K,\Sigma_s}^P.
\]
\end{definition}

The family is explicit: for any $s$ and any pair $x\in[K]^P$, $y\in(\Sigma_s^K)^P$, the entry $M_s[x,y]=F_s(x,y)$ is computable in time polynomial in the bit length of $(x,y)$, by evaluating the induced word of Eq.~\eqref{eq:route_two_induced_word} coordinatewise and then evaluating $f_s$ along its pointer chain.

\begin{theorem}[Lifted GPW separation]
\label{thm:route_two_main}
Let $M_s:=M_{F_s}$ be the communication matrix of the problem in Definition~\ref{def:route_two_function}.  Then the following statements hold.
\begin{enumerate}
\renewcommand{\theenumi}{(\alph{enumi})}
\renewcommand{\labelenumi}{\theenumi}
\item\label{part:route_two_comm_lower} $\mathsf{P}^{\mathsf{cc}}(F_s)=\Omega(s^2\log s)$.
\item\label{part:route_two_rank_upper} $\log\rank_{\R}(M_s)=O(s\log s)$.
\item\label{part:route_two_final} For all sufficiently large $s$,
$
  D(M_s)
  =
  \Omega(\frac{(\log \rank_{\R}(M_s))^2}
  {\log\log \rank_{\R}(M_s)}).
$
\end{enumerate}
\end{theorem}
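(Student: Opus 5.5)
The proof combines the ingredients already in place: the padded function $\widetilde f_s$ has the parameters in Eq.~\eqref{eq:route_two_padded_depth} and Eq.~\eqref{eq:route_two_padded_up}, and the hypotheses of Theorem~\ref{thm:route_two_lifting} have been checked. Each part follows by a short computation.

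\textbf{Part~\ref{part:route_two_comm_lower}.} Apply Theorem~\ref{thm:route_two_lifting} to $\widetilde f_s$ with $P=4N=4s^2$ and $q=2(N+1)\le P$. For $s$ large enough that $P\ge P_0$, this gives
\[
\mathsf{P}^{\mathsf{cc}}(F_s)\ge 10\,\mathsf{P}^{\mathsf{dt}}(\widetilde f_s)\log P=10N\log(4s^2)=\Omega(s^2\log s).
\]
The finitely many $s$ with $P<P_0$ are absorbed into the implied constant, since $\mathsf{P}^{\mathsf{cc}}(F_s)\ge1$ because $F_s$ is nonconstant.

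\textbf{Part~\ref{part:route_two_rank_upper}.} Apply Lemma~\ref{lem:route_two_up}-\ref{part:route_two_rank} with $t=2s-1$, using Eq.~\eqref{eq:route_two_padded_up}. This gives
\[
\log\rank_{\R}(M_s)\le (2s-1)\bigl(\log q+1000\log P\bigr).
\]
Both $q$ and $P$ are $O(s^2)$, so the right-hand side is $O(s\log s)$.

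\textbf{Part~\ref{part:route_two_final}.} Write $L:=\log\rank_{\R}(M_s)$, so that $L\le Cs\log s$ for a constant $C$. We need $L\to\infty$, and this is the one point that requires care. By the elementary bound from the introduction, $D(M_s)\le\rank_{\R}(M_s)+1$. Combined with part~\ref{part:route_two_comm_lower}, this forces $\rank_{\R}(M_s)=\Omega(s^2\log s)$, so $L\ge\log s$ for large $s$, and in particular $L\to\infty$.

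Next use that $g(x):=x^2/\log x$ is increasing for $x\ge e$. The case $L\le Cs\log s$ then gives
\[
g(L)\le g(Cs\log s)=\frac{C^2s^2\log^2 s}{\log(Cs\log s)}=O(s^2\log s),
\]
because $\log(Cs\log s)\ge\log s$. Combining with part~\ref{part:route_two_comm_lower} yields $D(M_s)=\Omega(s^2\log s)=\Omega(L^2/\log L)$, which is the claim.

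One could also eliminate $s$ directly, using $\log s=\Theta(\log L)$ from the two-sided estimate $\log s\le L\le Cs\log s$. The monotonicity argument is cleaner and avoids needing the lower bound on $L$ beyond the fact that $L\to\infty$.
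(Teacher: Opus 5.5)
Your proposal is correct and follows the paper's proof essentially step for step. You use Theorem~\ref{thm:route_two_lifting} for part (a), Lemma~\ref{lem:route_two_up}-\ref{part:route_two_rank} with $t=2s-1$ for part (b), and for part (c) the bound $D(M_s)\le\rank_{\R}(M_s)+1$ to get $L\to\infty$ followed by monotonicity of $x^2/\log x$. Your extra remarks fill in details the paper leaves implicit: absorbing the finitely many $s$ with $P<P_0$ via nonconstancy of $F_s$, and the explicit bound $L\ge\log s$.
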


\begin{proof}[{\bf Proof of Part~\ref{part:route_two_comm_lower}}]
Apply Theorem~\ref{thm:route_two_lifting} to $\widetilde f_s$.  Since $P=4N$, $N=s^2$, and Eq.~\eqref{eq:route_two_padded_depth} holds, we obtain
$\mathsf{P}^{\mathsf{cc}}(F_s)=\Omega(N\log P)=\Omega(s^2\log s)$.
The last step uses $P=4s^2$, and hence $\log P=\Theta(\log s)$.
This proves Theorem~\ref{thm:route_two_main}-\ref{part:route_two_comm_lower}.
\end{proof}

\begin{proof}[{\bf Proof of Part~\ref{part:route_two_rank_upper}}]
Apply Lemma~\ref{lem:route_two_up}-\ref{part:route_two_rank} with $t:=2s-1$, using Eq.~\eqref{eq:route_two_padded_up}.  Since
$q=2(s^2+1)$ and $K=(4s^2)^{1000}$,
Lemma~\ref{lem:route_two_up}-\ref{part:route_two_rank} gives
$\log\rank_{\R}(M_s)\le(2s-1)(\log q+\log K)=O(s\log s)$.
The last step uses $\log q=\Theta(\log s)$ and $\log K=1000\log(4s^2)=\Theta(\log s)$.
This proves Theorem~\ref{thm:route_two_main}-\ref{part:route_two_rank_upper}.
\end{proof}

\begin{proof}[{\bf Proof of Part~\ref{part:route_two_final}}]
It remains to eliminate $s$.  Write
$r_s:=\rank_{\R}(M_s)$ and $L_s:=\log r_s$.
We first note that $L_s\to\infty$.  Indeed, a Boolean matrix of real rank $r$ has at most $2^r$ distinct rows.  To see this, choose $r$ columns forming a basis of its column space.  Two rows that agree on those basis columns agree on every column and are therefore identical.  Alice can send the identity of her row using at most $r$ bits, and Bob then announces the function value using one further bit, so
$D(M_s)=\mathsf{P}^{\mathsf{cc}}(F_s)\le r_s+1$.
Theorem~\ref{thm:route_two_main}-\ref{part:route_two_comm_lower} now implies $r_s\to\infty$ and hence $L_s\to\infty$.

Choose a constant $C>0$ such that
$L_s\le Cs\log s$
for all sufficiently large $s$, as permitted by Theorem~\ref{thm:route_two_main}-\ref{part:route_two_rank_upper}.  The function
$\phi(x):=x^2/\log x$
is increasing for all sufficiently large $x$.  Since $L_s\to\infty$, for all sufficiently large $s$ we therefore have
\[
  \frac{L_s^2}{\log L_s}
  =
  \phi(L_s)
  \le
  \phi(Cs\log s)
  =
  O(s^2\log s).
\]
The last step uses $\log(Cs\log s)=\Theta(\log s)$.
Combining this estimate with Theorem~\ref{thm:route_two_main}-\ref{part:route_two_comm_lower} and the identity $D(M_s)=\mathsf{P}^{\mathsf{cc}}(F_s)$ gives
\[
  D(M_s)
  =
  \Omega(\frac{L_s^2}{\log L_s})
  =
  \Omega(\frac{(\log\rank_{\R}(M_s))^2}
  {\log\log\rank_{\R}(M_s)}),
\]
The last step substitutes $L_s=\log r_s$ and $r_s=\rank_{\R}(M_s)$.
This proves Theorem~\ref{thm:route_two_main}-\ref{part:route_two_final}.
\end{proof}

\begin{corollary}[Partition number separation]
\label{cor:route_two_partition}
Let $p(M_s)$ denote the minimum number of pairwise disjoint $1$-monochromatic rectangles whose union is the set of $1$-entries of $M_s$. Then, for all sufficiently large $s$,
\[
 D(M_s)
 =
 \Omega(\frac{(\log p(M_s))^2}{\log\log p(M_s)}).
\]
\end{corollary}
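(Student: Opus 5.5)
The argument follows the proof of Theorem~\ref{thm:route_two_main}-\ref{part:route_two_final}, with $r_s$ replaced by $p(M_s)$. By Lemma~\ref{lem:route_two_up}-\ref{part:route_two_partition} with $t:=2s-1$, which is allowed by Eq.~\eqref{eq:route_two_padded_up}, we get $p(M_s)=\chi_1(F_s)\le(qK)^{2s-1}$. Since $q=2(s^2+1)$ and $K=(4s^2)^{1000}$, this gives
\[
\log p(M_s)\le(2s-1)(\log q+\log K)=O(s\log s).
\]
So $\Pi_s:=\log p(M_s)$ satisfies $\Pi_s\le Cs\log s$ for some constant $C$ and all large $s$.

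Next I would check that $\Pi_s\to\infty$, so that the monotonicity of $\phi(x)=x^2/\log x$ can be used. The simplest route is $\rank_{\R}(M_s)\le p(M_s)$: the matrix is the sum of the indicator matrices of a minimum partition, and each indicator has rank one, as in the proof of Lemma~\ref{lem:route_two_up}-\ref{part:route_two_rank}. Hence $\Pi_s\ge L_s$, and $L_s\to\infty$ was already shown in the proof of Part~\ref{part:route_two_final}.

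For large $s$ we then have
\[
\frac{\Pi_s^2}{\log\Pi_s}=\phi(\Pi_s)\le\phi(Cs\log s)=O(s^2\log s),
\]
using $\log(Cs\log s)=\Theta(\log s)$. Theorem~\ref{thm:route_two_main}-\ref{part:route_two_comm_lower} gives $D(M_s)=\Omega(s^2\log s)$, and combining the two yields $D(M_s)=\Omega(\Pi_s^2/\log\Pi_s)$, which is the claim.

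The only point needing care is the growth $\Pi_s\to\infty$, which is needed to place $\Pi_s$ in the range where $\phi$ is increasing. The inequality $\rank_{\R}(M_s)\le p(M_s)$ settles it at once. The other steps repeat earlier computations.
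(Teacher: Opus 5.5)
Your proposal is correct and follows the paper's proof essentially step for step. You bound $\log p(M_s)=O(s\log s)$ via Lemma~\ref{lem:route_two_up}-\ref{part:route_two_partition} with $t=2s-1$, get $\log p(M_s)\to\infty$ from $\rank_{\R}(M_s)\le p(M_s)$, and then eliminate $s$ with $\phi(x)=x^2/\log x$ exactly as in Theorem~\ref{thm:route_two_main}-\ref{part:route_two_final}.
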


\begin{proof}
Apply Lemma~\ref{lem:route_two_up}-\ref{part:route_two_partition} with $t:=2s-1$, using Eq.~\eqref{eq:route_two_padded_up}, to obtain
\[
 p(M_s)=\chi_1(F_s)\le (qK)^{2s-1}.
\]
Taking logarithms and substituting $q=2(s^2+1)$ and $K=(4s^2)^{1000}$ as in the proof of Theorem~\ref{thm:route_two_main}-\ref{part:route_two_rank_upper} gives $\log p(M_s)=O(s\log s)$. Moreover $\rank_{\R}(M_s)\le p(M_s)$ by Lemma~\ref{lem:route_two_up}-\ref{part:route_two_rank}, so $\log p(M_s)\to\infty$. The elimination of $s$ now proceeds exactly as in the proof of Theorem~\ref{thm:route_two_main}-\ref{part:route_two_final}, with $L_s:=\log p(M_s)$ in place of $\log\rank_{\R}(M_s)$.
\end{proof}

\begin{corollary}[Clique versus independent set]
\label{cor:route_two_cis}
There is an explicit infinite family of graphs $(G_s)_{s\ge2}$, where $G_s$ has $n_s$ vertices, such that
\[
 D(\operatorname{CIS}_{G_s})
 =
 \Omega(\frac{(\log n_s)^2}{\log\log n_s}).
\]
\end{corollary}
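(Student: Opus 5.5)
We reduce to Corollary~\ref{cor:route_two_partition} through the standard Yannakakis translation between one-sided rectangle partitions and clique versus independent set. Given a Boolean matrix $M$ together with a partition of its $1$-entries into $p$ disjoint $1$-monochromatic rectangles $R_1,\dots,R_p$, define the graph $G$ on vertex set $[p]$ by joining $i\ne j$ exactly when $R_i$ and $R_j$ share a row, i.e.\ their row sets intersect. Because the rectangles are disjoint, two rectangles that share a row cannot share a column, since they would otherwise overlap in an entry. Apply this to $M:=M_s$ with an optimal partition, so $p=p(M_s)$, and let $G_s$ be the resulting graph with $n_s:=p(M_s)$ vertices.

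The key step is to embed $M_s$ as a submatrix of $\operatorname{CIS}_{G_s}$, so that $D(\operatorname{CIS}_{G_s})\ge D(M_s)$. Map each row $x$ to $A(x):=\{i: x\in\text{rows}(R_i)\}$. Any two such rectangles share row $x$, so $A(x)$ is a clique. Map each column $y$ to $B(y):=\{i: y\in\text{cols}(R_i)\}$. Two rectangles sharing column $y$ cannot share a row, so they are nonadjacent and $B(y)$ is independent. Then $|A(x)\cap B(y)|$ counts the rectangles containing $(x,y)$. By disjointness and coverage this count is $1$ exactly when $M_s[x,y]=1$, and $0$ otherwise. Hence any protocol for $\operatorname{CIS}_{G_s}$ yields a protocol for $M_s$ of the same cost, since each party applies its local map.

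Given this embedding, Corollary~\ref{cor:route_two_partition} immediately gives $D(\operatorname{CIS}_{G_s})\ge D(M_s)=\Omega((\log n_s)^2/\log\log n_s)$ for large $s$. Small $s$ are absorbed into the constant, or handled by starting the family at a sufficiently large index.

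The main obstacle is explicitness: $G_s$ should be described without an abstract optimal partition. We plan to use the explicit partition $\{R_{C,u}\}$ from the proof of Lemma~\ref{lem:route_two_up}. Its size $n_s\le (qK)^{2s-1}$ still gives $\log n_s=O(s\log s)$, and the lower bound $D(M_s)=\Omega(s^2\log s)$ comes from Theorem~\ref{thm:route_two_main}. We will eliminate $s$ exactly as in the proof of Theorem~\ref{thm:route_two_main}-\ref{part:route_two_final}, which requires $\log n_s\to\infty$; this follows from $n_s\ge\rank_{\R}(M_s)$. Adjacency in $G_s$ is then explicit. Vertices are pairs $(C,u)$. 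Two distinct vertices $(C,u)$ and $(C',u')$ are adjacent iff their Alice sides are compatible, meaning $u$ and $u'$ agree on $I_C\cap I_{C'}$. This is checkable in polynomial time from the certificate descriptions.
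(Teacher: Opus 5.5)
Your proposal is correct and matches the paper's proof. The paper takes the explicit partition $\{R_{C,u}\}$ from Lemma~\ref{lem:route_two_up}, builds the same row-sharing graph with the same clique and independent-set maps, and eliminates $s$ using $\log n_s=O(s\log s)$ together with $\mathsf{P}^{\mathsf{cc}}(F_s)=\Omega(s^2\log s)$. Your preliminary version through an optimal partition and Corollary~\ref{cor:route_two_partition} is a harmless detour. Your explicit adjacency test (whether $u$ and $u'$ agree on $I_C\cap I_{C'}$) and your observation that $n_s\ge\rank_{\R}(M_s)$ forces $\log n_s\to\infty$ are small details the paper leaves implicit.
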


\begin{proof}
Let $\mathcal R$ be the family of pairwise disjoint $1$-monochromatic rectangles constructed in the proof of Lemma~\ref{lem:route_two_up}-\ref{part:route_two_partition}, and let $G_s$ be the graph whose vertices are the members of $\mathcal R$, two of them being adjacent when they share a row. Alice maps her input $x$ to the set of rectangles meeting row $x$; any two of these share the row $x$, so this set is a clique. Bob maps his input $y$ to the set of rectangles meeting column $y$; two such rectangles cannot share a row, since a common row $x$ would place $(x,y)$ in both, contradicting disjointness, so this set is an independent set. The intersection of the two sets consists exactly of the rectangles containing $(x,y)$, of which there is one when $F_s(x,y)=1$ and none otherwise. Both maps are computed without communication, so $F_s$ is a subproblem of $\operatorname{CIS}_{G_s}$ and
\[
 D(\operatorname{CIS}_{G_s})\ge\mathsf{P}^{\mathsf{cc}}(F_s).
\]
The graph has $n_s=|\mathcal R|\le (qK)^{2s-1}$ vertices, so $\log n_s=O(s\log s)$ as in Corollary~\ref{cor:route_two_partition}, while Theorem~\ref{thm:route_two_main}-\ref{part:route_two_comm_lower} gives $\mathsf{P}^{\mathsf{cc}}(F_s)=\Omega(s^2\log s)$. Eliminating $s$ as in Theorem~\ref{thm:route_two_main}-\ref{part:route_two_final} proves the claim.
\end{proof}

\paragraph{Comparison with the GPW parameters.}
The error-correcting-code Booleanization used by GPW, followed by Boolean Index lifting, gives the denominator $(\log\log r)^2$ in Eq.~\eqref{eq:current_lower}; see~\cite[Section~4]{gpw18}.  The construction above instead lifts the pointer symbols directly.  Its deterministic lower bound is $\Omega(s^2\log s)$ rather than $\Omega(s^2\log^2 s)$, while its logarithmic rank upper bound is $O(s\log s)$ rather than $O(s\log^2 s)$.  Removing the Booleanization overhead therefore eliminates one factor of $\log\log r$ and yields Theorem~\ref{thm:route_two_main}-\ref{part:route_two_final}.  Within this construction, a quadratic lower bound would require eliminating the remaining Index-gadget overhead or proving a sharper rank bound for the lifted matrix.

\section{The multicolor simulation theorem}
\label{sec:route_two_multicolor}

We prove the large-alphabet deterministic simulation theorem used in
Theorem~\ref{thm:route_two_lifting}. Roughgarden and Weinstein state a
simulation theorem of this type for search problems over an arbitrary
alphabet~\cite[Theorem~5.5]{rw16}, attributing it to Raz and
McKenzie~\cite{rm99} and to G\"o\"os, Pitassi, and
Watson~\cite{gpw18}. With the gadget size chosen as in the paragraph
following their theorem, the statement directly covers the lifting
application used here. However, the cited proofs do not present this
result in the required form: G\"o\"os, Pitassi, and Watson give a formal
proof for the Boolean input alphabet, while Raz and McKenzie treat the
multicolor setting for structured search problems. We therefore include
a self-contained proof for the parameter regime needed here.

All logarithms are base two.  Let $\Sigma$ be an alphabet of size $q$.
Define
$\operatorname{Ind}_{m,\Sigma}(a,y):=y_a$ for $a\in[m]$ and $y\in\Sigma^m$,
and extend this gadget coordinatewise.

\begin{theorem}[Multicolor deterministic simulation]
\label{thm:route_two_multicolor}
There is an absolute constant $m_0$ such that the following holds.  Let
$\Sigma$ be an alphabet of size $q$, and suppose
\[
  m\ge m_0,
  \qquad
  1\le n\le m^{1/1000},
  \qquad
  2\le q\le m^{1/1000}.
\]
For every total function $f\colon\Sigma^n\to\{0,1\}$,
\begin{equation*}
  \mathsf{P}^{\mathsf{cc}}(f\circ\operatorname{Ind}_{m,\Sigma}^n)
  \ge
  \frac{1}{100}\mathsf{P}^{\mathsf{dt}}(f)\log m.
\end{equation*}
\end{theorem}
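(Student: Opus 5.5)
We prove the theorem by simulation in the style of Raz--McKenzie and G\"o\"os--Pitassi--Watson. Let $\Pi$ be a deterministic protocol for $F:=f\circ\operatorname{Ind}_{m,\Sigma}^n$ of cost $c<\frac{1}{100}\mathsf{P}^{\mathsf{dt}}(f)\log m$. From $\Pi$ we build a decision tree for $f$ of depth less than $\mathsf{P}^{\mathsf{dt}}(f)$, which is a contradiction.

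The tree walks down $\Pi$ while maintaining three objects. The first is a rectangle $X\times Y\subseteq [m]^n\times(\Sigma^m)^n$ consistent with the current transcript. The second is a set $I\subseteq[n]$ of queried coordinates, together with their answers $z_I\in\Sigma^I$. The third is a fixed address $\alpha_I$ with $x_I=\alpha_I$ for all $x\in X$ and $y_i(\alpha_i)=z_i$ for all $y\in Y$, $i\in I$. The invariant we aim for on the free coordinates $J:=[n]\setminus I$ is that $X_J$ is \emph{$\delta$-dense}, meaning blockwise min-entropy at least $(1-\delta)\log m$ per block, with $\delta$ small (say $\delta=1/20$). On Bob's side, we require only that $Y$ be large: its density in $(\Sigma^m)^{|J|}$ is at least $2^{-O(\text{bits communicated})-O(|I|\log m\cdot\text{small})}$.

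The key lemma is a \emph{multicolor thickness/uniformity lemma}. If $X_J$ is $\delta$-dense and $Y$ has density at least $2^{-m^{1/2}}$ (say), then for every $z\in\Sigma^J$ there is $(x,y)\in X\times Y$ whose induced word $w(x,y)_J=z$. Equivalently, the projection $(x,y)\mapsto w(x,y)_J$ is onto $\Sigma^J$.

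We prove this lemma by adapting the GPW argument from $\mathbb{F}_2$ to $\Sigma$. Fix $z$ and take $x$ uniform from $X$. The event $y_i(x_i)=z_i$ for all $i\in J$ is a restriction of $y$. Writing $y$ as $q$-ary strings, we bound $\Pr_{y\sim Y}[\forall i:\,y_i(x_i)=z_i]$ from below on average over $x$. The tool is a Fourier/character argument over $\mathbb{Z}_q^{m|J|}$ (or, more simply, a counting argument via Lindsey's lemma for $q$-ary strings): the Index gadget with min-entropy-rich Alice inputs is a good extractor. This is where $q\le m^{1/1000}$ and $n\le m^{1/1000}$ enter. The losses in the extractor are $q^{|J|}$-type factors, of size $2^{n\log q}\le 2^{m^{2/1000}}$, which must be dominated by the gap coming from $m$, so that $Y$'s deficiency stays negligible.

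The simulation then follows the GPW scheme.

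\textbf{Bob's step.} When Bob speaks, we keep the larger half, so $Y$ loses one bit of density.

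\textbf{Alice's step.} When Alice speaks, we keep the larger half of $X$. We then restore density by partitioning $X$ according to a maximal set $B\subseteq J$ of blocks violating min-entropy, fixing $x_B$ to a popular value $\alpha_B$, where it has fallen. A standard deficiency-potential argument charges each restored block at least $\delta\log m$ bits of Alice's accumulated loss.

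\textbf{Querying.} For the coordinates in $B$, the tree queries $z_B=f$-input symbols. Using the key lemma, we restrict $Y$ to $\{y:y_i(\alpha_i)=z_i,\ i\in B\}$. To see that this costs only about $|B|\log q$ bits of $Y$-density, we first pass to the subset of $X$ that keeps $Y$ conditioned-heavy, using the averaging version of the lemma. Since $q\le m^{1/1000}$, this is at most $\frac{1}{1000}|B|\log m$.

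\textbf{Depth accounting.} The number of queries is at most $c/(\delta\log m)$ plus lower-order terms, which is less than $\mathsf{P}^{\mathsf{dt}}(f)$ for $\delta=1/20$ and the constant $1/100$. The total $Y$-deficiency is $O(c+n\log q)\le m^{1/2}$, since $c\le n\log m+1$, so the key lemma stays applicable throughout.

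\textbf{Correctness.} At a leaf of $\Pi$, the rectangle is monochromatic with value $b$. Using $\delta$-density of $X_J$ and largeness of $Y$, the key lemma shows that every completion $z_J$ of the queried $z_I$ is realized in the rectangle, so $f(z_I,z_J)=b$ for all such completions. Hence the decision tree may output $b$.

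The main obstacle is the key lemma and its averaging version in the $q$-ary setting, i.e., proving that $\delta$-dense $X$ and dense $Y$ jointly hit every pattern in $\Sigma^J$. Here is how I would attack it. Apply the multicolor extension of the GPW Lindsey-lemma bound. For each nonempty $S\subseteq J$ and each nontrivial character $\chi$ of $\mathbb{Z}_q^S$, bound $|\mathbb{E}_{x,y}\chi(w(x,y)_S)|\le 2^{-\Omega(|S|\log m)}$ by Cauchy--Schwarz over $y$. The point is that collisions $x_i=x'_i$ occur with probability at most $m^{-(1-\delta)}$ per block. Then sum over the at most $q^{|S|}\binom{|J|}{|S|}$ characters, which are absorbed because $q,n\le m^{1/1000}$. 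Once this bound holds, the rest is bookkeeping.
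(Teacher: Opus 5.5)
Your route (blockwise min-entropy for Alice, density restoration that pins violated blocks to single addresses, and a Fourier uniform-marginals lemma) is a legitimate alternative in principle to the paper's thickness-based Raz--McKenzie-style argument. As written, however, two load-bearing steps do not go through.

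\textbf{The key lemma.} The argument you sketch does not prove it. Cauchy--Schwarz over $y$ gives, for a character $\chi$ nontrivial on every block of $S$,
\[
\Bigl|\E_{x,y}\chi\bigl(w(x,y)_S\bigr)\Bigr|^2
\le\E_{y\sim Y}\Bigl|\E_{x\sim X}\chi\bigl(w(x,y)_S\bigr)\Bigr|^2
\le 2^{\beta(Y)}\Pr_{x,x'\sim X}[x_S=x'_S]
\le 2^{\beta(Y)}m^{-(1-\delta)|S|},
\]
where $\beta(Y)$ is Bob's deficiency. The factor $2^{\beta(Y)}$ does not scale with $|S|$. Bob's deficiency grows by up to one per bit he sends, so it can be of order $c$, and you even budget $m^{1/2}$. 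For $|S|=1$ the right-hand side exceeds $1$ as soon as $\beta(Y)\ge\log m$. So the sum over characters is not controlled, and the full-range claim does not follow.

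The estimate you want is true, but it must come from Cauchy--Schwarz over $x$ instead. That costs only Alice's blockwise deficiency $m^{\delta|S|}$ and leaves $\E_{y,y'\sim Y}\prod_{i\in S}\bigl|\frac1m\sum_{a\in[m]}\chi_i(y_i(a))\overline{\chi_i(y'_i(a))}\bigr|$. You would then bound this by concentration: for uniform $y,y'$, each factor exceeds $m^{-1/5}$ only with probability $e^{-\Omega(m^{3/5})}$. A change of measure back to $Y$ costs $2^{2\beta(Y)}$, which is affordable only because $\beta(Y)\ll m^{3/5}$.

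\textbf{The query step.} This step is asserted rather than proved, and it is exactly where deterministic Index simulation is delicate. Density restoration fixes $x_B=\alpha_B$ using $X$ alone, before $z_B$ is known. Nothing in your invariants keeps $\{y\in Y:y_i(\alpha_i)=z_i\ \forall i\in B\}$ nonempty. For example (with $n$ somewhat larger than $\delta\log m$):
\begin{enumerate}
\item Bob spends $\lceil\log q\rceil$ bits making $y_1(a^*)$ constant on $Y$.
\item Alice then spends about $\delta\log m$ bits of the form ``$x_1=a^*$ or $x_j\in H_j$'', with $|H_j|=m/2$ and distinct $j\ne1$. The larger side always contains $x_1=a^*$.
\item This makes $a^*$ the unique violating value, with $B=\{1\}$ the maximal violating set. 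Your rule fixes $x_1=a^*$, and every answer $z_1\ne y_1(a^*)$ empties Bob's set.
\end{enumerate}
So $\alpha_B$ must be chosen with Bob in mind: it has to work for all $q^{|B|}$ answers at once and remain compatible with restoring density. Your ``averaging version'' bounds things for each fixed $z_B$, but a union bound over $q^{|B|}$ answers is not absorbed by an $m^{-\Omega(1)}$ averaging error once $|B|$ is large.

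\textbf{How the paper avoids both issues.} It maintains thickness (every nonempty fiber has size at least $m^{17/20}$) rather than min-entropy. It queries $z_i$ first, and only then restricts to a random $z_i$-monochromatic rectangle $U\times V_{z_i}(U)$ with $|U|=\lfloor m^{7/20}\rfloor$. Thickness makes $U$ meet every fiber, so Alice's projection is unchanged and her deficiency drops by $\log m-\log\operatorname{AvgDeg}_i$. The Pinsker-based fiber-hitting lemma shows Bob loses at most one bit. Correctness at a leaf then comes from auxiliary projections and the last-symbol lemma, with no global full-range lemma needed.
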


\subsection{Deficiency, projections, and thickness}
For nonempty sets $A\subseteq[m]^s$ and $B\subseteq(\Sigma^m)^s$, define
$\alpha(A):=\log(m^s/|A|)$ and $\beta(B):=\log(q^{ms}/|B|)$.
For $J\subseteq[s]$, let $A_J$ and $B_J$ denote the corresponding
coordinate projections.  In particular, $A_{-i}:=A_{[s]\setminus\{i\}}$.
We regard $[m]^0$ and $(\Sigma^m)^0$ as singleton sets, and thickness in
dimension zero is vacuous.  View every $a\in A$ as an edge between
$a_i\in[m]$ and $a_{-i}\in[m]^{s-1}$.  Define
$\operatorname{AvgDeg}_i(A):=|A|/|A_{-i}|$
and let $\operatorname{MinDeg}_i(A)$ be the minimum nonzero fiber size
over $A_{-i}$.  We call $A$ \emph{thick} if
$\operatorname{MinDeg}_i(A)\ge m^{17/20}$ for every $i\in[s]$.
For $U\subseteq[m]$ and $V\subseteq\Sigma^m$, write
$A_{i,U}:=\{a\in A:a_i\in U\}$ and $B_{i,V}:=\{b\in B:b_i\in V\}$.

\begin{lemma}[Thickness cleanup]
\label{lem:route_two_thickness}
Suppose $A\subseteq[m]^s$ is nonempty and
$\operatorname{AvgDeg}_i(A)\ge d$ for every $i\in[s]$. There is a
set $A'\subseteq A$ such that $|A'|\ge |A|/2$ and
$\operatorname{MinDeg}_i(A')\ge d/(2s)$ for every $i\in[s]$.
Consequently, $\alpha(A')\le\alpha(A)+1$.
\end{lemma}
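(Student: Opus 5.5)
We will prove Lemma~\ref{lem:route_two_thickness} with the standard greedy pruning argument. Set $A^{(0)}:=A$ and threshold $\tau:=d/(2s)$. While the current set $A^{(k)}$ has some coordinate $i\in[s]$ and some $a_{-i}\in A^{(k)}_{-i}$ whose nonzero fiber $\{a\in A^{(k)}:a_{-i}\text{ fixed}\}$ has size less than $\tau$, delete that entire fiber to obtain $A^{(k+1)}$. The process terminates because $A$ is finite and each step removes at least one element. The final set $A'$ has, by construction, $\operatorname{MinDeg}_i(A')\ge\tau$ for every $i$, provided $A'$ is nonempty. Nonemptiness will follow from the size bound below.

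The key step is bounding the total number of deleted elements. Each deletion step is charged to a pair $(i,a_{-i})$ with $a_{-i}\in A_{-i}$. The point to check is that each such pair is charged at most once. After its fiber is deleted, no element of the current set has that projection. Since we only ever remove elements, the fiber stays empty forever, so the pair can never be selected again.

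Each charged step removes fewer than $\tau$ elements. Therefore the total number removed is less than
\[
\sum_{i=1}^s |A_{-i}|\,\tau \le \sum_{i=1}^s \frac{|A|}{d}\cdot\frac{d}{2s}=\frac{|A|}{2},
\]
using $|A_{-i}|=|A|/\operatorname{AvgDeg}_i(A)\le |A|/d$. Hence $|A'|>|A|/2$, so in particular $A'$ is nonempty.

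The consequence for $\alpha$ is then immediate:
\[
\alpha(A')=\log(m^s/|A'|)\le\log(2m^s/|A|)=\alpha(A)+1.
\]

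The only mild subtlety is the one-charge-per-pair observation together with the fact that projections of the shrinking set are subsets of the original projections $A_{-i}$. One should also note the degenerate case $s=0$, where thickness is vacuous and we may take $A'=A$.
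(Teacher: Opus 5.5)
Your proposal is correct and follows the same approach as the paper: greedy deletion of light fibers, a one-charge-per-pair argument bounding the number of direction-$i$ deletions by $|A_{-i}|\le|A|/d$, and the resulting bound of fewer than $|A|/2$ deleted points. You also handle the nonemptiness of $A'$ and the degenerate case $s=0$, which the paper leaves implicit.
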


\begin{proof}
Repeatedly choose a coordinate $i$ and a nonempty fiber of size less
than $d/(2s)$, and delete that fiber. Once a fiber over some element of
$A_{-i}$ is deleted it stays empty, since points are only ever
removed, so the direction-$i$ deletions are indexed by distinct
elements of the original projection $A_{-i}$.  For a fixed $i$, the number of
fibers ever deleted is at most
$|A_{-i}|=|A|/\operatorname{AvgDeg}_i(A)\le|A|/d$.
Thus fewer than
$\sum_{i=1}^s(|A|/d)(d/(2s))=|A|/2$
points are deleted.  The process therefore terminates with the stated
set $A'$.
\end{proof}

\subsection{A multicolor fiber-hitting lemma}

We next prove the multicolor estimate that replaces the binary
entropy calculation in the GPW projection lemma.  Fix
$\sigma\in\Sigma$ and set
$u:=\lfloor m^{7/20}\rfloor$.
For $U\subseteq[m]$, define
$V_\sigma(U):=\{y\in\Sigma^m:y_j=\sigma\text{ for every }j\in U\}$.

\begin{lemma}[Multicolor fiber hitting]
\label{lem:route_two_fiber_hitting}
Suppose $2\le q\le m^{1/1000}$ and $m$ is sufficiently large.
If a nonempty set $W\subseteq\Sigma^m$ satisfies
\[
  \beta(W):=\log\frac{q^m}{|W|}
  \le m^{11/20},
\]
then a uniformly random $u$-element set $U\subseteq[m]$ satisfies
\begin{equation*}
  \Pr_U[V_\sigma(U)\cap W\ne\varnothing]\ge\frac34.
\end{equation*}
\end{lemma}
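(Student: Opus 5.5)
The plan is to argue by contradiction via a counting bound on $W$. Write $S(y):=\{j\in[m]:y_j=\sigma\}$, so that $y\in V_\sigma(U)$ exactly when $U\subseteq S(y)$. Let $\mathfrak B$ be the family of $u$-sets $U$ with $V_\sigma(U)\cap W=\varnothing$, and suppose $|\mathfrak B|\ge\frac14\binom mu$. Then every $y\in W$ has $S(y)$ containing no member of $\mathfrak B$. For uniform $y\in\Sigma^m$, the set $S(y)$ is a $p$-random subset of $[m]$ with $p=1/q$. It therefore suffices to prove
\begin{equation*}
 \Pr_{S\sim[m]_p}\bigl[\text{no }B\in\mathfrak B\text{ satisfies }B\subseteq S\bigr]<2^{-m^{11/20}},
\end{equation*}
because this contradicts $|W|\ge q^m2^{-m^{11/20}}$.

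\textbf{Step 1: Janson's inequality in the spread case.} Let $X$ count the members of $\mathfrak B$ contained in $S$. Then
\[
 \mu:=\mathbb E X\ge\tfrac14\binom mu p^u\approx\tfrac14(pm/u)^u,
\]
which is enormous because $pm/u\ge m^{0.649}$. Janson gives $\Pr[X=0]\le\exp(-\mu^2/(2\Delta))$, where $\Delta$ sums $p^{2u-k}$ over ordered pairs in $\mathfrak B$ sharing $k\ge1$ elements. If $\mathfrak B$ were the complete family, the $k$-th layer would satisfy
\[
 \Delta_k/\mu^2\approx(u^2q/m)^k/k!,
\]
and $u^2q/m\le m^{-0.29}$. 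This estimate alone only yields $\exp(-\Omega(m^{0.29}))$, which is too weak. It also ignores that a $\frac14$-dense $\mathfrak B$ can have inflated codegrees. I will therefore set up the argument through a spread parameter. Say that $\mathfrak B$ is $\kappa$-spread if for every $T$ with $|T|=k\ge1$, the number of members of $\mathfrak B$ containing $T$ is at most $\kappa^{-k}|\mathfrak B|$ times $\binom{m-k}{u-k}/\binom mu$. Under $\kappa$-spreadness the overlap layers are bounded and Janson gives a bound of the shape $\exp(-\Omega(\kappa\, pm/u^2))$. With $\kappa$ a suitable power of $m$ this beats $2^{-m^{11/20}}$, since $pm/u^2\ge m^{0.298}$ and the spread gain can be taken large.

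\textbf{Step 2: handling non-spread $\mathfrak B$.} This is where I expect the main obstacle. If $\mathfrak B$ is not spread, some small set $T$ is contained in an unusually large fraction of the members of $\mathfrak B$. I would run the standard spread-approximation iteration (in the style of Kahn--Kalai / Alweiss--Lovett--Wu--Zhang). Repeatedly extract maximal sets $T$ whose link family is dense, and pass to the links $\{B\setminus T\}$, which are spread. On each link, Janson from Step~1 applies. The remaining danger is the "core" event that $S$ misses every extracted $T$. The example $\mathfrak B=\{U:U\cap T\neq\varnothing\}$ shows this case is genuinely tight. There, covering a $\frac14$ fraction of $u$-sets forces $|T|\gtrsim m/u$. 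Then
\[
 \Pr[S\cap T=\varnothing]=(1-1/q)^{|T|}\le\exp(-m^{0.649}/q)\ll2^{-m^{11/20}}.
\]
In general, the cores must "cover" a constant fraction of the $u$-sets. A union bound over sizes then shows that $S$ avoiding enough cores has probability at most $\exp(-\Omega(m^{13/20}/q))$.

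\textbf{Step 3: combining the cases.} I would take a union bound over the $u^{O(u)}$ possible core configurations, which costs only $2^{O(m^{7/20}\log m)}$. Together with the hypothesis $q\le m^{1/1000}$, both cases give
\[
 \Pr[X=0]\le2^{-m^{0.6}}<2^{-m^{11/20}}
\]
for $m$ large. This contradicts the size of $W$ and shows that $\Pr_U[V_\sigma(U)\cap W\neq\varnothing]\ge\frac34$.

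The delicate point is the bookkeeping in Step~2. The accounting must balance the spread parameter used in Janson against the number and size of cores, and must still leave the exponent $m^{13/20-1/1000}$ above $m^{11/20}$. If this accounting fails, my fallback is a direct second-moment argument on the reweighted count $\sum_{y\in W}\mathbf 1[U\subseteq S(y)]/\Pr[U\subseteq S(y)]$. That route would first restrict $W$ by thickness-type cleanup, as in Lemma~\ref{lem:route_two_thickness}, to control the pair correlations $|S(y)\cap S(y')|$.
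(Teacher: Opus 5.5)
Your opening reduction is sound. With $\mathfrak B$ the family of missing $u$-sets, the lemma is equivalent to showing that a $1/q$-random set $S$ contains no member of $\mathfrak B$ with probability below $2^{-m^{11/20}}$.

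The proof breaks at Step~1, and Steps~2--3 depend on it. First, your $\kappa$-spread condition cannot hold for any $\kappa>1$. The average over all $k$-sets $T$ of the number of members containing $T$ is exactly $|\mathfrak B|\binom uk/\binom mk=|\mathfrak B|\binom{m-k}{u-k}/\binom mu$, so the maximum is at least this. More fundamentally, no spreadness hypothesis can make Janson strong enough here. Let $d_j$ be the number of members containing $j$, and let $\bar\Delta$ include the diagonal pairs. Using $q^k\ge qk$ for $k\ge1$ and Cauchy--Schwarz, every family of $u$-sets satisfies
\[
 \frac{\bar\Delta}{\mu^2}
 =\frac{1}{|\mathfrak B|^2}\sum_{\substack{(B,B')\in\mathfrak B^2\\ B\cap B'\ne\varnothing}}q^{|B\cap B'|}
 \ge\frac{q}{|\mathfrak B|^2}\sum_{j=1}^m d_j^2
 \ge\frac{qu^2}{m}.
\]
So the Janson exponent $\mu^2/\bar\Delta$ is at most $m/(qu^2)\le m^{3/10}$, far below the required $m^{11/20}$. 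The same cap $m/(qu'^2)$ holds for any link family of $u'$-sets. Janson on links therefore helps only when $u'\lesssim m^{9/40}$, i.e.\ when the cores have size nearly $u$, and that returns all the difficulty to the core event. The bound $\exp(-\Omega(m^{13/20}/q))$ for that event and the $u^{O(u)}$ count of core configurations are asserted, not derived. The second-moment fallback is not developed enough to evaluate.

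The missing idea is to avoid any exponentially small global tail bound and argue locally with entropy, as the paper does.
\begin{itemize}
\item Call $j$ bad for $T\subseteq\Sigma^m$ if fewer than $|T|/(2q)$ words of $T$ have $y_j=\sigma$. For uniform $Y\in T$, the law of $Y_j$ is then at $\ell_1$-distance at least $1/q$ from uniform. Pinsker gives $\log q-H(Y_j)\ge 1/(3q^2)$, and subadditivity of entropy bounds the number of bad coordinates by $3q^2\beta(T)$.
\item Sample $U=\{j_1,\dots,j_u\}$ sequentially without replacement, and restrict $W$ to $y_{j_t}=\sigma$ after each draw.
\item Each good draw shrinks the current set by at most a factor $2q$. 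Hence the deficiency stays at most $m^{11/20}+u\log(2q)$, and the bad set stays of size at most $m^{14/25}$.
\item Each draw is therefore bad with probability at most $2m^{-11/25}$. A union bound over $u\le m^{7/20}$ draws gives failure probability at most $2m^{-9/100}<1/4$.
\end{itemize}
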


\begin{proof}
For a nonempty $T\subseteq\Sigma^m$, define
$\operatorname{Bad}_\sigma(T):=\{j\in[m]:|\{y\in T:y_j=\sigma\}|<|T|/(2q)\}$.
We first claim that
\begin{equation}
  |\operatorname{Bad}_\sigma(T)|
  \le
  3q^2\beta(T).
  \label{eq:route_two_bad_coordinates}
\end{equation}
Let $Y$ be uniform on $T$, let $p_j$ be the distribution of $Y_j$,
and let $u_\Sigma$ be the uniform distribution on $\Sigma$.  If $j$ is
bad, then
$u_\Sigma(\sigma)-p_j(\sigma)>1/(2q)$.
Because the coordinate differences between $p_j$ and $u_\Sigma$ sum to
zero,
$\lVert p_j-u_\Sigma\rVert_1\ge1/q$.
Pinsker's inequality in base-two units, together with the identity
$D_{\mathrm{KL}}^{(2)}(p_j\|u_\Sigma)=\log q-H(Y_j)$, gives
\[
 \log q-H(Y_j)
 =D_{\mathrm{KL}}^{(2)}(p_j\|u_\Sigma)
 \ge\frac{\log e}{2}\lVert p_j-u_\Sigma\rVert_1^2
 \ge\frac{\log e}{2q^2}
 \ge\frac{1}{3q^2}.
\]
Subadditivity of entropy now gives
\[
  m\log q-\beta(T)
  =
  H(Y)
  \le
  \sum_{j=1}^m H(Y_j)
  \le
  m\log q-
  \frac{|\operatorname{Bad}_\sigma(T)|}{3q^2},
\]
which proves Eq.~\eqref{eq:route_two_bad_coordinates}.

Sample $U:=\{j_1,\ldots,j_u\}$ sequentially without replacement.  Set
$W_0:=W$ and $W_t:=\{y\in W:y_{j_1}=\cdots=y_{j_t}=\sigma\}$.
Condition on the event that
$j_r\notin\operatorname{Bad}_\sigma(W_{r-1})$ for every $r\le t$.
Then
\[
  \beta(W_t)
  \le
  \beta(W)+t\log(2q)
  \le
  m^{11/20}+u\log(2q).
\]
Using $q\le m^{1/1000}$ and Eq.~\eqref{eq:route_two_bad_coordinates},
for all sufficiently large $m$ we have
$|\operatorname{Bad}_\sigma(W_t)|\le m^{14/25}$.
Since $u=o(m)$, the conditional probability that the next sampled
coordinate is bad is at most $2m^{-11/25}$.  A union bound over the
$u\le m^{7/20}$ choices gives total failure probability at most
$2m^{7/20-11/25}=2m^{-9/100}<1/4$.
If no bad coordinate is selected, every $W_t$ is nonempty, and hence
$V_\sigma(U)\cap W=W_u\ne\varnothing$.  This proves Lemma~\ref{lem:route_two_fiber_hitting}.
\end{proof}

\subsection{Projection and realization of the final symbol}

\begin{lemma}[Multicolor projection lemma]
\label{lem:route_two_projection}
Suppose that $m$ is sufficiently large,
\[
  1\le s\le m^{1/1000},
  \qquad
  2\le q\le m^{1/1000},
\]
$A\subseteq[m]^s$ is nonempty and thick, and the nonempty set
$B\subseteq(\Sigma^m)^s$ satisfies $\beta(B)\le m^{1/10}$.
For every $i\in[s]$ and $\sigma\in\Sigma$, there is an
$\sigma$-monochromatic rectangle
\[
  U\times V\subseteq[m]\times\Sigma^m
\]
for $\operatorname{Ind}_{m,\Sigma}$ such that
\begin{enumerate}
\renewcommand{\theenumi}{(\alph{enumi})}
\renewcommand{\labelenumi}{\theenumi}
\item\label{part:route_two_projection_thick}
$(A_{i,U})_{-i}$ is thick.
\item\label{part:route_two_projection_alpha}
$\alpha((A_{i,U})_{-i})
\le
\alpha(A)-\log m+\log\operatorname{AvgDeg}_i(A)$.
\item\label{part:route_two_projection_beta}
$\beta((B_{i,V})_{-i})
\le
\beta(B)+1$.
\end{enumerate}
Under the convention above, thickness in
Lemma~\ref{lem:route_two_projection}-\ref{part:route_two_projection_thick}
is vacuous when $s=1$.
\end{lemma}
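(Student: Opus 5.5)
The plan is to follow the GPW projection lemma, choosing $U$ randomly and $V:=V_\sigma(U)$. The rectangle $U\times V_\sigma(U)$ is automatically $\sigma$-monochromatic for $\operatorname{Ind}_{m,\Sigma}$, since $a\in U$ and $y_j=\sigma$ for all $j\in U$ give $y_a=\sigma$. I would take $U$ to be a uniformly random $u$-element subset of $[m]$ with $u=\lfloor m^{7/20}\rfloor$, as in Lemma~\ref{lem:route_two_fiber_hitting}. The argument then shows that each of the three properties holds with probability greater than $1/4$, $1/2$ and $1/2$ respectively, so that some $U$ satisfies all of them simultaneously (or adjusts constants so that the failure probabilities sum to less than one).

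\textbf{The $A$-side, properties \ref{part:route_two_projection_thick} and \ref{part:route_two_projection_alpha}.}
\begin{enumerate}
\item Fix a point $a'\in A_{-i}$. Its fiber $F_{a'}\subseteq[m]$ has size at least $m^{17/20}$ by thickness.
\item The expected size of $|U\cap F_{a'}|$ is at least $u m^{-3/20}\approx m^{1/5}$. A hypergeometric tail bound gives $\Pr[U\cap F_{a'}=\varnothing]\le\exp(-m^{1/5})$.
\item Union-bound over all $a'\in A_{-i}$. This needs $|A_{-i}|\le m^{s-1}\le \exp(m^{1/1000}\ln m)$, which is far smaller than $\exp(m^{1/5})$.
\item With high probability every $a'\in A_{-i}$ survives, so $(A_{i,U})_{-i}=A_{-i}$.
\item Property \ref{part:route_two_projection_alpha} then follows from the identity $\alpha(A_{-i})=\log(m^{s-1}|A_{-i}|^{-1}\cdot|A|/|A|)=\alpha(A)-\log m+\log\operatorname{AvgDeg}_i(A)$.
\item For property \ref{part:route_two_projection_thick}, I show that $A_{-i}$ is thick. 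For $j\neq i$, a nonempty $j$-fiber of $A_{-i}$ over a point $c$ contains the projection of a nonempty $j$-fiber of $A$ over some point extending $c$. That fiber has size at least $m^{17/20}$, and projection does not shrink it because the $j$-coordinate is retained.
\end{enumerate}

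\textbf{The $B$-side, property \ref{part:route_two_projection_beta}.} I would use an averaging argument over $b_{-i}\in B_{-i}$. Write $W_{b'}:=\{y\in\Sigma^m: (y,b')\in B\}$, with $y$ placed in coordinate $i$.
\begin{enumerate}
\item Call $b'$ \emph{good} if $\beta(W_{b'})\le m^{11/20}$.
\item Let $G$ be the set of good $b'$. The bad $b'$ contribute at most $|B_{-i}|q^{m}2^{-m^{11/20}}$ points to $B$.
\item Since $|B|\ge q^{ms}2^{-m^{1/10}}$ and $|B_{-i}|\le q^{m(s-1)}$, this contribution is a negligible fraction of $|B|$. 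Hence $|G|\,q^m\ge|B|/2$, so $|G|\ge q^{m(s-1)}2^{-\beta(B)-1}$.
\item By Lemma~\ref{lem:route_two_fiber_hitting}, each good $b'$ satisfies $W_{b'}\cap V_\sigma(U)\ne\varnothing$ with probability at least $3/4$. In that case $b'\in(B_{i,V})_{-i}$.
\item By linearity of expectation, the expected number of good $b'$ hit is at least $\tfrac34|G|$.
\item By Markov's inequality applied to the misses, with probability at least $1/2$ at least $|G|/2$ of them are hit. This gives $\beta((B_{i,V})_{-i})\le\beta(B)+2$, which misses the target by one.
\end{enumerate}

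\textbf{Tightening to $\beta(B)+1$.} To recover the stated bound I would sharpen two steps. First, the bad contribution is at most $2^{-m^{11/20}+m^{1/10}}$ of $|B|$, so $|G|\ge(1-o(1))|B|/q^m$. Second, the Markov step only needs a weaker conclusion: with probability at least $1/2$, at least half of $G$ is hit. This already gives $|(B_{i,V})_{-i}|\ge(1-o(1))|G|/2$, which is still about $\beta(B)+1+o(1)$.

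To remove the $o(1)$, I would instead apply Markov with the threshold "more than a $1/(2-\epsilon)$ fraction missed". Since misses have expectation at most $1/4$, this threshold fails with probability at most about $1/2$. The conclusion then uses $|B_{-i}|\ge|G|$ and $|B|\le q^m|B_{-i}|$, combined as $\beta((B_{i,V})_{-i})\le\log(q^{m(s-1)}/|B_{-i}|)+1\le\beta(B)+1$. The key point is to compare with $|B_{-i}|$ rather than with $|G|$, using $|G|\ge(1-o(1))|B_{-i}|$ inside $B_{-i}$.

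The main obstacle is this constant bookkeeping on the $B$ side. I would also need to check that the $A$-side failure probability and the $B$-side failure probability together sum to less than one, so that a single choice of $U$ works for all three properties.
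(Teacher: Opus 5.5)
Your $A$-side argument is the same as the paper's. Full projection $(A_{i,U})_{-i}=A_{-i}$ fails with probability only $m^{s-1}e^{-\Omega(m^{1/5})}$, and parts (a) and (b) then follow from thickness of $A_{-i}$ and the exact identity for $\alpha(A_{-i})$. Note that this failure probability is negligible, not the $3/4$ your opening sentence allows. Your $B$-side set-up is also sound: take the good $b'$ with $\beta(W_{b'})\le m^{11/20}$, apply Lemma~\ref{lem:route_two_fiber_hitting} to each good fibre, and use Markov on the misses. This correctly gives $\beta(B)+1+o(1)$.

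The gap is the final ``tightening'' paragraph. That is exactly where the constant $+1$ of part~(c) must come from, and it fails as written, for two reasons.

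\textbf{The threshold is reversed.} If failure means that more than a $1/(2-\epsilon)$ fraction of $G$ is missed, then on success you only know that a $(1-\epsilon)/(2-\epsilon)<1/2$ fraction is hit. That is weaker than what you had before.

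\textbf{The ``key point'' $|G|\ge(1-o(1))|B_{-i}|$ is false.} Bad $b'$ carry a negligible share of the \emph{mass} of $B$, but they can be most of $B_{-i}$ by \emph{count}. Take $s=2$ with coordinate $i$ first, a set $G_0$ of size about $q^m/4$, a word $y_0$ containing no $\sigma$, and $B:=(\Sigma^m\times G_0)\cup(\{y_0\}\times\Sigma^m)$. Then $|B_{-i}|=q^m$ and $G=G_0$. Also $(B_{i,V})_{-i}=G_0$ for every $U$, so even $|(B_{i,V})_{-i}|\ge|B_{-i}|/2$ fails. The lemma itself still holds here, since $\beta(B)\approx2$. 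Comparing with $|B_{-i}|$ is simply the wrong normalisation.

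\textbf{How to repair it within your framework.} Compare with $|B|/q^m$ instead. With $\delta:=2^{m^{1/10}-m^{11/20}}$ you have $|G|\ge(1-\delta)|B|/q^m$. So it suffices that at least a $\frac{1}{2(1-\delta)}$ fraction of $G$ is hit. The expected missed fraction is at most $\frac14$, so Markov bounds the failure probability by $\frac{1-\delta}{2(1-2\delta)}$. Success therefore has probability close to $\frac12$, far above the $A$-side failure probability. On success, $|(B_{i,V})_{-i}|\ge|B|/(2q^m)$, which gives exactly $\beta(B)+1$.

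\textbf{How the paper does it.} The paper avoids the good/bad split altogether.
\begin{enumerate}
\item It extends Lemma~\ref{lem:route_two_fiber_hitting} to all $W$ as $\Pr_U[V_\sigma(U)\cap W\ne\varnothing]\ge\frac34|W|/q^m-2^{-m^{11/20}}$.
\item It averages over all $b_{-s}\in(\Sigma^m)^{s-1}$ to get $\E_U[X(U)]\ge\frac58d$, where $X(U):=|(B_{s,V_\sigma(U)})_{-s}|/q^{m(s-1)}$ and $d:=2^{-\beta(B)}$.
\item It uses $0\le X\le1$ to get $\Pr_U[X\ge d/2]\ge d/8$. This is tiny, but still exceeds the $A$-side failure probability $2^{-m^{3/20}}$.
\end{enumerate}
Your repaired route yields a constant success probability. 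The paper's route gets the exact constant without threshold tuning, because the bad-fibre loss is absorbed additively into the slack between $\frac34$ and $\frac58$.
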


\begin{proof}
By relabeling coordinates, assume $i=s$.  Choose a uniformly random
$u$-element set $U\subseteq[m]$ and set $V:=V_\sigma(U)$.  Then
$U\times V$ is $\sigma$-monochromatic.

We first show that, except with probability less than
$2^{-m^{3/20}}$,
\begin{equation}
  (A_{s,U})_{-s}=A_{-s}.
  \label{eq:route_two_full_projection}
\end{equation}
Every nonempty fiber of $A$ over $A_{-s}$ has size at least
$m^{17/20}$.  The probability that $U$ misses a fixed such fiber is
at most
$\exp(-m^{4/20}/2)$.
There are at most $m^{s-1}$ fibers.  Since
$s\le m^{1/1000}$, a union bound proves
Eq.~\eqref{eq:route_two_full_projection} with the asserted failure
probability for all sufficiently large $m$.

\medskip
\noindent{\bf Proof of Part~\ref{part:route_two_projection_thick}.}
If Eq.~\eqref{eq:route_two_full_projection} holds, projection cannot
decrease any of the remaining minimum degrees, so
Lemma~\ref{lem:route_two_projection}-\ref{part:route_two_projection_thick} holds.

\medskip
\noindent{\bf Proof of Part~\ref{part:route_two_projection_alpha}.}
Moreover,
$|A_{-s}|=|A|/\operatorname{AvgDeg}_s(A)$,
and hence
$\alpha(A_{-s})=\alpha(A)-\log m+\log\operatorname{AvgDeg}_s(A)$.
This proves Lemma~\ref{lem:route_two_projection}-\ref{part:route_two_projection_alpha}.

\medskip
\noindent{\bf Proof of Part~\ref{part:route_two_projection_beta}.}
It remains to find some $U$ that also satisfies
Eq.~\eqref{eq:route_two_full_projection}.  Set
$d:=|B|/q^{ms}=2^{-\beta(B)}$.
For each $b_{-s}\in(\Sigma^m)^{s-1}$, let
$W_{b_{-s}}:=\{b_s\in\Sigma^m:(b_{-s},b_s)\in B\}$.
Lemma~\ref{lem:route_two_fiber_hitting} implies, for every
$W\subseteq\Sigma^m$,
\begin{equation}
  \Pr_U[V_\sigma(U)\cap W\ne\varnothing]
  \ge
  \frac34\frac{|W|}{q^m}-2^{-m^{11/20}}.
  \label{eq:route_two_fiber_all_sizes}
\end{equation}
The claim is immediate when $W=\varnothing$.  Suppose that $W$ is
nonempty.  If $\beta(W)\le m^{11/20}$, the left-hand side is at least
$3/4$ by Lemma~\ref{lem:route_two_fiber_hitting}.  Otherwise,
$|W|/q^m<2^{-m^{11/20}}$, so the right-hand side is negative.

Let
$X(U):=|(B_{s,V_\sigma(U)})_{-s}|/q^{m(s-1)}$.
Averaging Eq.~\eqref{eq:route_two_fiber_all_sizes} over all fibers
gives
\[
  \E_U[X(U)]
  \ge
  \frac34d-2^{-m^{11/20}}
  \ge
  \frac58d,
\]
where the last inequality uses
$d\ge2^{-m^{1/10}}$.  Since $0\le X(U)\le1$,
$\Pr_U[X(U)\ge d/2]\ge d/8$.
For sufficiently large $m$,
$d/8\ge2^{-m^{1/10}-3}>2^{-m^{3/20}}$.
Let $\mathcal E_{\mathrm{full}}$ denote the event in
Eq.~\eqref{eq:route_two_full_projection}, and let
$\mathcal E_{\mathrm{hit}}:=\{X(U)\ge d/2\}$. Then inclusion--exclusion gives
$\Pr_U[\mathcal E_{\mathrm{full}}\cap\mathcal E_{\mathrm{hit}}]
\ge\Pr_U[\mathcal E_{\mathrm{hit}}]-\Pr_U[\mathcal E_{\mathrm{full}}^c]$.
The preceding bounds show that the right-hand side is at least
$d/8-2^{-m^{3/20}}>0$. Thus some $U$ satisfies both events.  For this choice,
\[
  \beta((B_{s,V})_{-s})
  =
  -\log X(U)
  \le
  -\log(d/2)
  =
  \beta(B)+1,
\]
which proves Lemma~\ref{lem:route_two_projection}-\ref{part:route_two_projection_beta}.
\end{proof}

\begin{lemma}[Realizing the final symbol]
\label{lem:route_two_last_color}
Suppose $S\subseteq[m]$ and a nonempty set $T\subseteq\Sigma^m$ satisfy
\[
  |S|\ge m^{17/20},
  \qquad
  \beta(T)\le m^{1/10},
  \qquad
  2\le q\le m^{1/1000}.
\]
For every $\sigma\in\Sigma$, there are $a\in S$ and $y\in T$ such
that $y_a=\sigma$.
\end{lemma}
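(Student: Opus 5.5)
We argue by contradiction: suppose that no $a\in S$ and $y\in T$ satisfy $y_a=\sigma$. Then every $y\in T$ avoids the symbol $\sigma$ on all coordinates of $S$, so
\[
  T\subseteq\{y\in\Sigma^m: y_a\ne\sigma\text{ for every }a\in S\}.
\]
The set on the right has size $(q-1)^{|S|}q^{m-|S|}$. Consequently
\[
  \beta(T)=\log\frac{q^m}{|T|}\ge |S|\log\frac{q}{q-1}.
\]

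The remaining step is a numerical comparison. We use $\log\frac{q}{q-1}=-\log(1-1/q)\ge \frac{\log e}{q}\ge \frac1q$. Together with $|S|\ge m^{17/20}$ and $q\le m^{1/1000}$, this gives
\[
  \beta(T)\ge m^{17/20-1/1000}.
\]
This exceeds $m^{1/10}$ for every $m\ge2$, and it contradicts the hypothesis $\beta(T)\le m^{1/10}$. Hence some $a\in S$ and $y\in T$ satisfy $y_a=\sigma$.

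As an alternative, one could apply Eq.~\eqref{eq:route_two_bad_coordinates} directly. Since $|\operatorname{Bad}_\sigma(T)|\le 3q^2\beta(T)\le 3m^{1/500+1/10}<m^{17/20}\le|S|$ for large $m$, some $a\in S$ is not bad. That coordinate then has at least $|T|/(2q)>0$ elements $y\in T$ with $y_a=\sigma$. However, this route needs $m$ large, whereas the direct counting argument works for all $m$.

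No step is a genuine obstacle. The only care needed is the elementary inequality $\log\frac{q}{q-1}\ge 1/q$ and checking that the exponents compare as $17/20-1/1000>1/10$.
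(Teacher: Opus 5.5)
Your proof is correct and is essentially the paper's argument: assume no witness exists, bound $|T|\le q^m(1-1/q)^{|S|}$, and use $\log\frac{q}{q-1}\ge\frac{\log e}{q}$ with $q\le m^{1/1000}$ to get $\beta(T)\ge m^{849/1000}$, which contradicts $\beta(T)\le m^{1/10}$. Your remark that no largeness assumption on $m$ is needed is also right; in any case the hypothesis $2\le q\le m^{1/1000}$ already forces $m\ge 2^{1000}$.
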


\begin{proof}
Otherwise every $y\in T$ avoids $\sigma$ on all coordinates in $S$,
so
$|T|\le q^m(1-1/q)^{|S|}$.
It follows that
\[
  \beta(T)
  \ge
  |S|\log\frac{q}{q-1}
  \ge
  \frac{|S|\log e}{q}
  \ge
  m^{849/1000}\log e,
\]
contradicting $\beta(T)\le m^{1/10}$ for sufficiently large $m$.
\end{proof}

\subsection{Proof of the multicolor simulation theorem}
\begin{proof}[Proof of Theorem~\ref{thm:route_two_multicolor}]
Let
$F:=f\circ\operatorname{Ind}_{m,\Sigma}^n$
and let $c:=\mathsf{P}^{\mathsf{cc}}(F)$.  If
$c\ge n\log m/100$,
then the conclusion of Theorem~\ref{thm:route_two_multicolor} follows from
$\mathsf{P}^{\mathsf{dt}}(f)\le n$.  We may therefore assume
\begin{equation}
  c<\frac{n\log m}{100}.
  \label{eq:route_two_low_communication}
\end{equation}
Fix a deterministic protocol of cost $c$ for $F$.

We construct a decision tree for $f$.  On an unknown input
$z\in\Sigma^n$, maintain a protocol node $v$, a set $I\subseteq[n]$
of unqueried coordinates, and nonempty sets
$A\subseteq[m]^n$ and $B\subseteq(\Sigma^m)^n$.
Initially, $v$ is the protocol root, $I:=[n]$, and $A,B$ are the full
spaces.  We maintain the following invariants:
\begin{enumerate}
  \item $A_I$ is thick.
  \item $A\times B$ is contained in the protocol rectangle at $v$.
  \item For every $j\notin I$ and every $(a,b)\in A\times B$,
$\operatorname{Ind}_{m,\Sigma}(a_j,b_j)=z_j$.
\end{enumerate}

While $v$ is not a protocol leaf and $I$ is nonempty, perform one of
the following two steps.

If
$\operatorname{AvgDeg}_i(A_I)\ge m^{19/20}$ for every $i\in I$,
simulate the next protocol bit.  If Alice speaks, then for
$\varepsilon\in\{0,1\}$, let $A^\varepsilon\subseteq A$ be the set of
Alice inputs that send $\varepsilon$ at $v$. Choose $\varepsilon$ so
that, for $C:=(A^\varepsilon)_I$, we have $|C|\ge |A_I|/2$.
For every $i\in I$,
\[
  \operatorname{AvgDeg}_i(C)
  =
  \frac{|C|}{|C_{-i}|}
  \ge
  \frac{|A_I|/2}{|(A_I)_{-i}|}
  =
  \frac12\operatorname{AvgDeg}_i(A_I)
  \ge
  \frac12m^{19/20}.
\]
Apply Lemma~\ref{lem:route_two_thickness} to $C$ with
$d:=m^{19/20}/2$, obtaining $C'\subseteq C$.  Since $|I|\le m^{1/1000}$,
the quantity $m^{19/20}/(4|I|)$ is at least $m^{17/20}$ for sufficiently
large $m$, so $C'$ is thick.  Restrict $A$ to the $\varepsilon$-branch so that $A_I=C'$,
and move $v$ to the $\varepsilon$-child.  This communication step increases
$\alpha(A_I)$ by at most two.

If Bob speaks, let $B^\varepsilon\subseteq B$ be the set of Bob
inputs that send $\varepsilon$ at $v$, and choose $\varepsilon$ such that
$|(B^\varepsilon)_I|\ge|B_I|/2$.
Replace $B$ by $B^\varepsilon$ and move $v$ to the
$\varepsilon$-child.  This communication step increases $\beta(B_I)$
by at most one.  In either case, the three invariants are preserved.

Otherwise, choose $i\in I$ with
$\operatorname{AvgDeg}_i(A_I)<m^{19/20}$
and query $z_i$. Apply
Lemma~\ref{lem:route_two_projection} to $A_I,B_I$ with
$\sigma:=z_i$, obtaining $U\subseteq[m]$ and $V\subseteq\Sigma^m$. Define
$A':=\{a\in A:a_i\in U\}$, $B':=\{b\in B:b_i\in V\}$, and
$I':=I\setminus\{i\}$. Lemma~\ref{lem:route_two_projection}-\ref{part:route_two_projection_thick}
shows that $A'_{I'}$ is thick. Containment in the current protocol
rectangle is preserved because $A'\times B'\subseteq A\times B$.
Moreover, $U\times V$ is $z_i$-monochromatic, so the third invariant
holds for the newly queried coordinate $i$ and remains true for the
previously queried coordinates by restriction.
Lemma~\ref{lem:route_two_projection}-\ref{part:route_two_projection_alpha}
and the choice of $i$ give
\begin{equation}
  \alpha(A'_{I'})
  \le
  \alpha(A_I)-\frac{\log m}{20},
  \label{eq:route_two_potential_drop}
\end{equation}
while Lemma~\ref{lem:route_two_projection}-\ref{part:route_two_projection_beta}
gives $\beta(B'_{I'})\le\beta(B_I)+1$. Finally, replace
$(A,B,I)$ by $(A',B',I')$.

Fix an arbitrary root-to-leaf path of the constructed decision tree.
For a prefix of this path in which every query step performed so far is
valid, let $Q_t$ denote the number of such query steps. There are at most
$c$ communication steps in the prefix. The potential $\alpha(A_I)$ starts
at zero, remains nonnegative, increases by at most two in each Alice
communication step, and decreases by at least $\log m/20$ in every query
step. Therefore,
$Q_t\le 40c/\log m$.

We prove by induction that every query step on the path is valid. Before a
prospective query, suppose that all preceding query steps are valid. The
prefix bound above applies. Since each Bob communication step and each
preceding query step increases $\beta(B_I)$ by at most one,
we have
$\beta(B_I)\le c+Q_t\le c+40c/\log m$.
By Eq.~\eqref{eq:route_two_low_communication} and $n\le m^{1/1000}$,
for all sufficiently large $m$ this quantity is less than
$n\log m/100+2n/5<m^{1/10}$. Hence the hypothesis
$\beta(B_I)\le m^{1/10}$ of Lemma~\ref{lem:route_two_projection}
holds, so the next query step is valid. This completes the induction.

Let $Q$ denote the total number of query steps on the chosen root-to-leaf
path. Applying the prefix bound to the complete path gives
\begin{equation}
  Q\le\frac{40c}{\log m}.
  \label{eq:route_two_query_count}
\end{equation}
Since the path was arbitrary, Eq.~\eqref{eq:route_two_query_count} holds
on every root-to-leaf path. Moreover, throughout each path,
$\beta(B_I)\le c+Q<m^{1/10}$.

If all coordinates are queried, the decision tree outputs $f(z)$.
Suppose instead that the simulation reaches a protocol leaf while
$I\ne\varnothing$.  The decision tree outputs the label of this leaf.
To prove that this output is correct, fix an arbitrary word
$w\in\Sigma^n$ consistent with the query answers along the current
path. While $|I|>1$, choose a coordinate $i\in I$ and apply
Lemma~\ref{lem:route_two_projection} to $A_I,B_I$ with
$\sigma:=w_i$, obtaining $U\subseteq[m]$ and $V\subseteq\Sigma^m$.
Replace
$A:=\{a\in A:a_i\in U\}$, $B:=\{b\in B:b_i\in V\}$, and
$I:=I\setminus\{i\}$. Lemma~\ref{lem:route_two_projection}-\ref{part:route_two_projection_thick}
preserves thickness, while
Lemma~\ref{lem:route_two_projection}-\ref{part:route_two_projection_beta}
increases the deficiency of the projected Bob set by at most one per
restriction. These auxiliary restrictions are not operations
performed by the decision tree. There are at most $n-1$ such
restrictions.  By Eq.~\eqref{eq:route_two_low_communication} and
Eq.~\eqref{eq:route_two_query_count}, the deficiency throughout this
continuation is at most
\[
  c+Q+n
  \le
  \frac{n\log m}{100}+\frac{2n}{5}+n
  <
  m^{1/10}
\]
for all sufficiently large $m$.

At the end of the auxiliary restriction process, write $I=\{i\}$.
Thickness gives $|A_{\{i\}}|\ge m^{17/20}$, while
$\beta(B_{\{i\}})\le m^{1/10}$.  By
Lemma~\ref{lem:route_two_last_color}, there are
$a_i\in A_{\{i\}}$ and $b_i\in B_{\{i\}}$ satisfying
$\operatorname{Ind}_{m,\Sigma}(a_i,b_i)=w_i$.
Choose $a\in A$ extending the selected $a_i$, and choose $b\in B$
extending the selected $b_i$.  Such extensions exist by the definitions
of $A_{\{i\}}$ and $B_{\{i\}}$.  Since the maintained set is the
Cartesian product $A\times B$, the pair $(a,b)$ belongs to it.  The
invariants and the auxiliary restrictions imply
$\operatorname{Ind}_{m,\Sigma}^n(a,b)=w$.
Moreover, $(a,b)$ lies in the protocol rectangle of the reached leaf.
The leaf label is therefore $F(a,b)=f(w)$.  Since $w$ was arbitrary,
every completion consistent with the decision-tree path has the same
$f$-value, and the constructed decision tree is correct.

Eq.~\eqref{eq:route_two_query_count} now gives
$\mathsf{P}^{\mathsf{dt}}(f)\le40c/\log m$.
Together with the high-communication case, this completes the proof of
Theorem~\ref{thm:route_two_multicolor}.
\end{proof}

\bibliographystyle{alpha}
\bibliography{ref}

\appendix

%%% This file is the structure for appendix content
%%% TeX files for body contents should be named as:
%%% 50_xxxx.tex
%%% 51_xxxx.tex
%%% ...

%\input{50_roadmap}

\section*{Acknowledgments}

The author used Codex 5.6 Sol and Claude Code Fable 5 to assist with language editing and grammar checking.
\section{Implication from BBGJK}
\label{sec:bbgjk}

Balodis, Ben-David, G\"o\"os, Jain, and Kothari~\cite{bbgjk23} constructed graphs whose chromatic number is exponential in a near-quadratic power of the logarithm of their biclique partition number. Equivalently, for infinitely many $n$ there is an $n$-vertex graph $H$ such that the clique versus independent-set problem on $H$ requires $\widetilde\Omega(\log^2 n)$ bits of conondeterministic communication. The polyloglog factors hidden by the tilde are not computed in~\cite{bbgjk23}. In this section we trace the quantitative bounds stated in the cited sources, together with an explicit padding step, and show that they yield total Boolean matrices $M$ satisfying
\begin{equation}
 D(M)
 \ge
 \log p(M)
 =
 \Omega(\frac{(\log \rank_{\R}(M))^2}
 {(\log\log \rank_{\R}(M))^{7}}).
 \label{eq:bbgjk_seven}
\end{equation}
This justifies the exponent $7$ in Table~\ref{tab:log_rank_bounds}, which compares with the exponent $2$ of Eq.~\eqref{eq:current_lower} and the exponent $1$ of Theorem~\ref{thm:route_two_main}-\ref{part:route_two_final}.

\paragraph{Measures.}
Following~\cite{bbgjk23}, for a total Boolean function $g$ on $N$ variables, $\mathsf{C}_1(g)$ is the least $k$ such that $g$ can be written as a $k$-DNF, $\mathsf{C}_0(g)$ is the least $k$ such that $g$ can be written as a $k$-CNF, and $\mathsf{UC}_1(g)$ is the least $k$ such that $g$ can be written as an unambiguous $k$-DNF; thus $\mathsf{UC}_1$ coincides with the measure $\mathsf{UP}^{\mathsf{dt}}$ of Section~\ref{sec:lower_proof}. We write $\mathsf{C}(g):=\max\{\mathsf{C}_0(g),\mathsf{C}_1(g)\}$. For a partial function $f\colon\{0,1\}^{N}\to\{0,1,*\}$ and an input $x$, the measure $\mathsf{C}_{\overline1}(f,x)$ is the least size of a partial assignment consistent with $x$ all of whose extensions $x'$ satisfy $f(x')\in\{0,*\}$, and $\mathsf{C}_{\overline0}(f,x)$ is defined symmetrically with $\{1,*\}$.

For a total two-party problem $F$, let $\operatorname{cov}_0(F)$ be the minimum number of $0$-monochromatic rectangles whose union is $F^{-1}(0)$, and define the conondeterministic communication complexity $\mathsf{coNP}^{\mathsf{cc}}(F):=\lceil\log\operatorname{cov}_0(F)\rceil$. As in Section~\ref{sec:proof_overview}, $\mathsf{UP}^{\mathsf{cc}}(F):=\lceil\log\chi_1(F)\rceil$ denotes the unambiguous communication complexity. For an $n$-vertex graph $H$, the clique versus independent-set problem $\operatorname{CIS}_H$ gives Alice a clique $x$ and Bob an independent set $y$, both as vertex sets, with output $|x\cap y|\in\{0,1\}$; a clique and an independent set share at most one vertex, so the output is well defined.

\paragraph{Ingredients.}
The chain of~\cite{bbgjk23} consists of the following four steps, quoted with their exact parameters.

\begin{lemma}[{EAH function; \cite[Section~3]{bbgjk23}}]
\label{lem:bbgjk_eah}
For every sufficiently large $m$ there is a partial function $\operatorname{Eah}_m$ on $N_0:=2m^2$ variables and an input $z\in\operatorname{Eah}_m^{-1}(*)$ such that
\[
 \mathsf{C}(\operatorname{Eah}_m)\le 200\,m\log m,
 \qquad
 \mathsf{C}_{\overline0}(\operatorname{Eah}_m,z)\ge m^2/100,
 \qquad
 \mathsf{C}_{\overline1}(\operatorname{Eah}_m,z)=m^2.
\]
\end{lemma}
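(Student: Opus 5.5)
This lemma is quoted from \cite[Section~3]{bbgjk23}, so the plan is to reconstruct the relevant part of their construction rather than derive it from the earlier results of this paper.

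\textbf{The function and the input $z$.} The variables are arranged as an $m\times m$ grid of cells, each cell holding two bits, for $N_0=2m^2$ variables in total. Call the two bits of a cell a ``value'' bit and an ``activity'' bit. $\operatorname{Eah}_m$ is a promise version of an ``every row/column has a certified witness'' predicate. A $1$-input contains, in some row, a short chain of $O(m)$ cells that certifies acceptance, in the spirit of the pointer certificates of Lemma~\ref{lem:route_two_pointer_parameters}. Pointers are encoded in binary across $O(\log m)$ cells, which is where the factor $\log m$ comes from. A $0$-input is certified symmetrically through a column structure. All remaining inputs are assigned $*$. The input $z$ is chosen to be highly degenerate, for instance the all-zeros grid, which lies in neither promise set.

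\textbf{Proving the three bounds.}
\begin{enumerate}
\item $\mathsf{C}(\operatorname{Eah}_m)\le 200\,m\log m$. On every $1$-input, exhibit the row structure together with the $O(m)$ pointed-to cells. Each cell costs $O(\log m)$ bits, so reading them fixes $O(m\log m)$ variables and forces every extension to be a $1$-input. The $0$-side certificate is handled symmetrically.
\item $\mathsf{C}_{\overline1}(\operatorname{Eah}_m,z)=m^2$. The upper bound $m^2$ comes from fixing every activity bit. For the lower bound, show that any partial assignment consistent with $z$ that leaves some cell unfixed admits an extension which completes a valid $1$-structure. The promise is designed so that a single free cell can complete such a structure.
\item $\mathsf{C}_{\overline0}(\operatorname{Eah}_m,z)\ge m^2/100$. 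Run an adversary or counting argument. If fewer than $m^2/100$ variables are fixed, some column (or a constant fraction of them) is mostly free. Using the free cells, build an extension that is a genuine $0$-input, which shows the assignment does not force $\{1,*\}$.
\end{enumerate}

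\textbf{Main obstacle.} The hardest step is designing the promise so that both certificate complexities at $z$ are quadratic while the total certificate complexity on promise inputs stays $O(m\log m)$. The difficulty is that the two $z$-bounds require many extensions through arbitrary free cells. In practice I would defer to the exact construction of \cite{bbgjk23} and only check the constants $200$ and $1/100$ against their statement.
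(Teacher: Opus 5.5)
\textbf{There is a genuine gap: $\operatorname{Eah}_m$ is never actually defined.} The ``row structure'', the ``column structure'', the role of the activity bits, and the promise are all placeholders. So each of the three bounds is asserted as a design property rather than proved (``the promise is designed so that a single free cell can complete such a structure''), and that property is exactly the content of the lemma.

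The sketch also pulls against its own requirements. With $z$ all zeros, $\mathsf{C}_{\overline1}(\operatorname{Eah}_m,z)\ge m^2$ means that for \emph{every} set $T$ of $m^2-1$ variables some $1$-input vanishes on $T$. Take $T$ to be all but one activity bit, respectively all but one value bit. Then your certifying chain of $\Theta(m)$ cells with binary-encoded pointers must be realizable with a single active cell, respectively a single nonzero value bit. At the same time, your upper bound (fix all activity bits) needs every $1$-input to contain an active cell. Nothing in the sketch shows that such flexible $1$-inputs exist. Moreover, if ``a single free cell completes the structure'' means the extension differs from $z$ only in that cell, the extension has at most two nonzero bits and cannot contain the chain at all.

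Items~1 and~3 have the same problem. Item~1 needs a fixed $1$-structure to exclude every $0$-structure, so that all extensions stay in $\operatorname{Eah}_m^{-1}(1)\cup\operatorname{Eah}_m^{-1}(*)$. Item~3 needs a ``mostly free column'' to be completable to a genuine $0$-input consistent with all fixed bits elsewhere. Both depend on the undefined promise.

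\textbf{The paper itself gives no proof of this lemma.} It imports the statement from \cite[Section~3]{bbgjk23} as a black box, with $N_0=2m^2$ and the constants $200\,m\log m$, $m^2/100$ and $m^2$ quoted from that source. It uses the lemma only as input to the cheat-sheet step. So the sound part of your proposal is its last sentence: cite the BBGJK construction and check these constants against their stated theorem. The reconstructed pointer-based design should be dropped, or replaced by BBGJK's actual definition of $\operatorname{Eah}_m$ together with their proofs of the three bounds. As written it is a guess at the function, not an argument for the lemma.
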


\begin{lemma}[{Cheat sheets; \cite[Section~5.1]{bbgjk23}}]
\label{lem:bbgjk_cs}
Let $f$ be a partial function on $N$ variables and $x\in f^{-1}(*)$, and suppose $\min\{\mathsf{C}_{\overline0}(f,x),\mathsf{C}_{\overline1}(f,x)\}<N$. Then there is a total function $g$ on $n_g\le 3N^2\log^2N$ variables with
\[
 \mathsf{C}_0(g)\ge\min\{\mathsf{C}_{\overline0}(f,x),\mathsf{C}_{\overline1}(f,x)\}
 \qquad\text{and}\qquad
 \mathsf{UC}_1(g)\le 3\,\mathsf{C}(f)\log^2N.
\]
\end{lemma}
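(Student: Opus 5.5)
The plan is to follow the cheat-sheet construction of~\cite{bbgjk23}. Let $c:=\lceil\log N\rceil+1$, so that $2^c>N$. The input of $g$ has two parts.
\begin{enumerate}[label=(\roman*)]
\item $c$ blocks $x^1,\ldots,x^c\in\{0,1\}^N$, each an input to $f$.
\item $2^c$ \emph{cells} $Y_\ell$, indexed by $\ell\in\{0,1\}^c$. Each cell has room for $c$ partial assignments to $N$ variables. Each assignment has at most $\mathsf{C}(f)$ entries, and each entry is written as a position in $[N]$ plus a bit, i.e.\ with $O(\log N)$ bits.
\end{enumerate}
A partial assignment $\rho$ to $N$ variables is a \emph{$b$-certificate} if every extension $x'$ of $\rho$ satisfies $f(x')\in\{b,*\}$. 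We set $g=1$ exactly when there is an $\ell$ such that, for every $i$, the $i$th assignment listed in $Y_\ell$ is a $b$-certificate with $b=\ell_i$ and is consistent with $x^i$. The variable count is $c\,N+2^c\cdot c\,\mathsf{C}(f)\cdot O(\log N)$. Using $2^c\le 4N$ and $\mathsf{C}(f)\le N$, this is at most $3N^2\log^2N$ after fixing the encoding constants.

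For the upper bound on $\mathsf{UC}_1(g)$, I take the following certificate. It reads all bits of the pointed cell $Y_\ell$ and the at most $c\,\mathsf{C}(f)$ bits of the blocks that the listed assignments name. This has width $c\,\mathsf{C}(f)\,O(\log N)+c\,\mathsf{C}(f)\le 3\,\mathsf{C}(f)\log^2N$. Every extension of such a certificate is a $1$-input of $g$, because validity of the listed assignments is a property of the fixed cell contents alone.

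Unambiguity is the step I expect to be the main obstacle. Given $\ell$, the certificate is determined by the input: the cell contents fix which block positions are read. The danger is two different indices $\ell\ne\ell'$ being simultaneously valid. That needs some block $x^i$ to be consistent with both a $0$-certificate and a $1$-certificate, which can only happen when $x^i$ is a $*$-input. I would first try to exclude this by strengthening validity: each listed assignment must be a genuine certificate of a domain input, taken from a fixed canonical family of pairwise-inconsistent $b$-certificates, chosen once for all of $f^{-1}(0)\cup f^{-1}(1)$. Membership in this family is again a property of the cell contents only, so it does not change the width. Failing that, I would add a tie-breaking rule, for instance accepting only through the lexicographically least valid $\ell$, and charge the extra verification to the same $O(\log^2N)$ budget.

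For the lower bound on $\mathsf{C}_0(g)$, I would use an adversary. Set every block equal to $x$ and fill every cell with invalid garbage, so this is a $0$-input. Suppose a $0$-certificate $\pi$ of $g$ fixes fewer than $\min\{\mathsf{C}_{\overline0}(f,x),\mathsf{C}_{\overline1}(f,x)\}<N$ bits. Then $\pi$ fixes fewer than $\mathsf{C}_{\overline0}(f,x)$ and fewer than $\mathsf{C}_{\overline1}(f,x)$ bits inside each block. So each block can be completed consistently with $\pi$ to a domain input, of value $0$ or of value $1$ as we choose. Because $2^c>N$ exceeds the number of fixed bits, some cell $Y_\ell$ is untouched by $\pi$. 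Complete block $i$ to a domain input of value $\ell_i$, and write valid canonical certificates for these inputs into $Y_\ell$. This extension of $\pi$ has $g=1$, a contradiction, which gives $\mathsf{C}_0(g)\ge\min\{\mathsf{C}_{\overline0}(f,x),\mathsf{C}_{\overline1}(f,x)\}$.
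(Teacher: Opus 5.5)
\textbf{There is a genuine gap: the unambiguity step you flag is not resolved, and cannot be under the usual definition of $\mathsf{C}(f)$.} The paper gives no proof of this lemma; it is quoted from~\cite{bbgjk23}. So the question is whether your argument establishes it, and it fails exactly at that step.

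Your first fix presupposes families $\mathcal P_0,\mathcal P_1$ of width at most $\mathsf{C}(f)$ that cover $f^{-1}(0)$ and $f^{-1}(1)$, with every member of $\mathcal P_0$ inconsistent with every member of $\mathcal P_1$. Nothing in the hypothesis provides them. With the usual partial-function certificates (all extensions in $\{b,*\}$, equivalently DNF/CNF representations that agree with $f$ only on its domain), a $0$-certificate and a $1$-certificate can both be consistent with a $*$-input. The least width of cross-inconsistent families is not controlled by $\mathsf{C}(f)$. Your second fix requires each term to certify that every lexicographically smaller cell is invalid, a conjunction over up to $2^c-1$ cells. Certifying invalidity of cells is exactly what your own $\mathsf{C}_0$ lower bound shows to be expensive, so it cannot be absorbed into an $O(\log^2N)$ budget.

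In fact no construction can work under that usual reading. For total $g$ one has $\mathsf{C}_0(g)\le\mathsf{P}^{\mathsf{dt}}(g)\le\mathsf{UC}_1(g)^2$. To see this, repeatedly query all variables of some term consistent with the answers so far. A term still consistent afterwards conflicted with the queried term on a previously unqueried variable, so every surviving term gains a queried variable per round, and at most $\mathsf{UC}_1(g)$ rounds query anything. Now let $f$ act on $w,y\in\{0,1\}^{N/2}$: $f=1$ iff some $w_j=1$ and $y=0$; $f=0$ iff some $y_j=1$ and $w=0$; $f=*$ otherwise. The $1$-DNF $\bigvee_j w_j$ and the $1$-CNF $\bigwedge_j\neg y_j$ agree with $f$ on its domain, so $\mathsf{C}(f)=1$. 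At $z=0^N$, excluding the value $1$ (resp.\ $0$) forces fixing all of $w$ (resp.\ $y$), so $\mathsf{C}_{\overline0}(f,z)=\mathsf{C}_{\overline1}(f,z)=N/2<N$. The lemma would then give $N/2\le\mathsf{C}_0(g)\le 9\log^4N$, which fails for large $N$.

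So the statement needs a stronger notion of $\mathsf{C}(f)$, in which certificates of opposite values are mutually inconsistent, for instance $b$-certificates whose subcube lies inside $f^{-1}(b)$. This is a property of $f$ to be assumed (and checked for $\operatorname{Eah}_m$), not something the proof can manufacture. Under that reading your plain construction is already unambiguous, with no fix needed. Two terms with the same $\ell$ differ on the fully read cell. Terms with $\ell\ne\ell'$ conflict inside any block $i$ with $\ell_i\ne\ell'_i$, since a common extension would take both values. Your adversary then goes through verbatim, and the all-$z$ input is a $0$-input for any cell contents, because no such certificate is consistent with a $*$-input.

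A minor point: with $c=\lceil\log N\rceil+1$, $2^c$ can approach $4N$, which breaks the bound $3N^2\log^2N$ when $\mathsf{C}(f)$ is close to $N$. Taking $c=\lceil\log N\rceil$ already gives $2^c\ge N>|\pi|$.
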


\begin{lemma}[{Lifting; \cite[Theorem~4 and Eq.~(2)]{g15}, due to~\cite{glmwz16}}]
\label{lem:bbgjk_lift}
There is an absolute constant $\beta_0>0$ such that the following holds for the inner-product gadget $\operatorname{IP}_b$ on $b:=100\lceil\log N\rceil$ bits per party. For every total function $g$ on $N$ variables, the problem $F:=g\circ\operatorname{IP}_b^{N}$ satisfies, for all sufficiently large $N$,
\[
 \mathsf{coNP}^{\mathsf{cc}}(F)\ge\beta_0\,\mathsf{C}_0(g)\cdot b
 \qquad\text{and}\qquad
 \mathsf{UP}^{\mathsf{cc}}(F)\le\mathsf{UC}_1(g)\cdot(\lceil\log N\rceil+2b).
\]
\end{lemma}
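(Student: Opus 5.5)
The lemma has two independent halves, and I would prove them separately. The upper bound on $\mathsf{UP}^{\mathsf{cc}}(F)$ is elementary and follows the rectangle-counting argument of Lemma~\ref{lem:route_two_up}-\ref{part:route_two_partition}. The lower bound on $\mathsf{coNP}^{\mathsf{cc}}(F)$ is the nondeterministic lifting theorem of~\cite{glmwz16} for the inner-product gadget, and I would not reprove it here.

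\textbf{Upper bound.} Let $k:=\mathsf{UC}_1(g)$ and fix an unambiguous $k$-DNF for $g$. Its terms are pairwise disjoint subcubes of $\{0,1\}^N$ that together cover $g^{-1}(1)$.
\begin{enumerate}
\item \emph{Counting terms.} Each term fixes at most $k$ variables, so its subcube has uniform measure at least $2^{-k}$. Disjointness then bounds the number of terms by $2^k$.
\item \emph{Partitioning one gadget fiber.} For a single coordinate and a bit $v$, the set $\operatorname{IP}_b^{-1}(v)$ is the disjoint union of the rectangles $\{x\}\times\{y:\langle x,y\rangle=v\}$ over $x\in\{0,1\}^b$. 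Dropping the empty ones leaves at most $2^b$ disjoint $v$-monochromatic rectangles.
\item \emph{Building the partition of $F^{-1}(1)$.} For each term $T$ and each choice of one gadget rectangle for every variable fixed by $T$, take the product rectangle. This is a $1$-rectangle of $F$, exactly as in the proof of Lemma~\ref{lem:route_two_up}. Coverage holds because every $1$-input induces a word in $g^{-1}(1)$, which lies in a unique term, and within that term each coordinate lies in a unique gadget rectangle. Disjointness follows from unambiguity of the terms together with disjointness of the gadget rectangles.
\item \emph{Counting.} This gives $\chi_1(F)\le 2^k\cdot 2^{bk}$. Hence $\mathsf{UP}^{\mathsf{cc}}(F)\le k(b+1)+1$, which is at most $k(\lceil\log N\rceil+2b)$ since $b\ge 1$ and $k\ge1$. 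The case $k=0$, where $g$ is constant, is trivial.
\end{enumerate}

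\textbf{Lower bound.} I would invoke the nondeterministic (certificate-complexity) lifting theorem of G\"o\"os, Lovett, Meka, Watson, and Zuckerman~\cite{glmwz16}, in the form recorded as Theorem~4 and Eq.~(2) of~\cite{g15}. It states that for $b\ge c\log N$ with a suitable absolute constant $c$, the nondeterministic communication complexity of $g\circ\operatorname{IP}_b^N$ is $\Omega(\mathsf{C}_1(g)\cdot b)$.

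Applying this to the negation $\neg g$ turns $\mathsf{C}_1$ into $\mathsf{C}_0(g)$ and nondeterministic into conondeterministic complexity. This uses two facts: $\neg g\circ\operatorname{IP}_b^N=\neg F$, and a $1$-cover of $\neg F$ is exactly a $0$-cover of $F$.

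It remains to check that $b=100\lceil\log N\rceil$ meets the theorem's gadget-size hypothesis for large $N$. I would also make sure the lower bound is stated for the logarithm of the rectangle cover number, so that $\lceil\log\operatorname{cov}_0(F)\rceil$ matches. That fixes $\beta_0$ as the absolute constant of the cited theorem.

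\textbf{Main obstacle.} I expect the main difficulty to be this bookkeeping around the citation rather than any new argument. The cited statements are phrased with unspecified constants, sometimes for the gadget size $b=\Theta(\log N)$ and sometimes with an additive $\log$ term. I would check that the constant $100$ exceeds the threshold in~\cite{glmwz16}. If it does not, I would either pad the gadget or simply record that $\beta_0$ absorbs the loss.
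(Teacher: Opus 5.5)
The paper gives no argument of its own for this lemma: both halves are imported from \cite{g15}. Your lower-bound half does the same, and the passage to the $\mathsf{C}_1$ form is correct, via $\neg F=(\neg g)\circ\operatorname{IP}_b^N$, $\mathsf{C}_1(\neg g)=\mathsf{C}_0(g)$, and the fact that $1$-covers of $\neg F$ are $0$-covers of $F$.

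The real difference is in the upper half. Instead of quoting \cite[Eq.~(2)]{g15}, you rerun the rectangle count of Lemma~\ref{lem:route_two_up}-\ref{part:route_two_partition}, partitioning each fiber $\operatorname{IP}_b^{-1}(v)$ into the pieces $\{x\}\times\{y:\langle x,y\rangle=v\}$. This is correct and self-contained. It is also slightly sharper: from $\chi_1(F)\le 2^{\mathsf{UC}_1(g)(b+1)}$ you get $\mathsf{UP}^{\mathsf{cc}}(F)\le\mathsf{UC}_1(g)\,(b+1)$, compared with the quoted factor $\lceil\log N\rceil+2b$. Your extra $+1$ is unnecessary, since the exponent is already an integer. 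In the application both factors are $\Theta(\log m)$, so the gain is only in constants and the exponent $7$ of Proposition~\ref{prop:bbgjk_seven} is unchanged.

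One caution concerns your contingency plan for the gadget-size threshold. If the cited theorem required $b'\ge C\log N$ with $C>100$, neither padding the gadget nor shrinking $\beta_0$ would rescue the statement at $b=100\lceil\log N\rceil$. The reason is that $\operatorname{IP}_b$ embeds into $\operatorname{IP}_{b'}$ for $b'\ge b$ by zero-padding. Lower bounds therefore transfer from smaller gadgets to larger ones, not conversely. Like the paper, you genuinely rely on the cited theorem holding at this gadget size.
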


\begin{lemma}[{CIS completeness; \cite{y91}, see also~\cite[Figure~1]{g15}}]
\label{lem:bbgjk_cis}
For every total two-party problem $F$ there is a graph $H$ on $n:=2^{\mathsf{UP}^{\mathsf{cc}}(F)}$ vertices such that $F$ is a subproblem of $\operatorname{CIS}_H$ under local input maps; consequently $\mathsf{coNP}^{\mathsf{cc}}(\operatorname{CIS}_H)\ge\mathsf{coNP}^{\mathsf{cc}}(F)$. Moreover, $D(\operatorname{CIS}_H)=O(\log^2 n)$ for every $n$-vertex graph $H$.
\end{lemma}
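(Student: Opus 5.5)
The plan is to reuse the row-sharing graph construction from the proof of Corollary~\ref{cor:route_two_cis}, now applied to an arbitrary total problem $F$, and then to quote Yannakakis's protocol for the upper bound.

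\textbf{Step 1 (the graph).} Let $\mathcal R$ be a partition of $F^{-1}(1)$ into $\chi_1(F)$ pairwise disjoint $1$-monochromatic rectangles. Since $\mathsf{UP}^{\mathsf{cc}}(F)=\lceil\log\chi_1(F)\rceil$, we have $|\mathcal R|\le 2^{\mathsf{UP}^{\mathsf{cc}}(F)}=:n$. Let $H$ have vertex set $\mathcal R$ together with $n-|\mathcal R|$ isolated dummy vertices. Two rectangles are adjacent in $H$ when their row sets intersect.

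\textbf{Step 2 (the embedding).} Alice maps $x$ to the set of rectangles whose row set contains $x$; any two of these share row $x$, so this set is a clique. Bob maps $y$ to the set of rectangles whose column set contains $y$. If two of these shared a row $x'$, then $(x',y)$ would lie in both, which contradicts disjointness; so this set is independent. The intersection of the two sets is exactly the set of rectangles containing $(x,y)$. By disjointness and coverage this set has size $F(x,y)$, so $F$ is a subproblem of $\operatorname{CIS}_H$ via local maps $\phi_A,\phi_B$.

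\textbf{Step 3 (monotonicity of $\mathsf{coNP}^{\mathsf{cc}}$).} Take a cover of $\operatorname{CIS}_H^{-1}(0)$ by $0$-monochromatic rectangles $S\times T$. The preimages $\phi_A^{-1}(S)\times\phi_B^{-1}(T)$ are rectangles. Each is $0$-monochromatic for $F$, because $F(x,y)=\operatorname{CIS}_H(\phi_A(x),\phi_B(y))$. Together they cover $F^{-1}(0)$. Hence $\operatorname{cov}_0(F)\le\operatorname{cov}_0(\operatorname{CIS}_H)$, and $\mathsf{coNP}^{\mathsf{cc}}$ is monotone.

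\textbf{Step 4 (the upper bound).} For an arbitrary $n$-vertex graph, run the Yannakakis protocol~\cite{y91}. Maintain a current vertex set $S$, initially $V(H)$, that contains every vertex that could lie in $x\cap y$. In each round, Alice checks whether her clique $x\cap S$ has a vertex $v$ with fewer than $|S|/2$ neighbours in $S$.
\begin{itemize}
\item If it does, she sends $v$ using $\lceil\log n\rceil$ bits. If $v\in y$, both parties output $1$. Otherwise both restrict $S$ to the neighbours of $v$, since any common vertex other than $v$ lies in the clique $x$ and is therefore adjacent to $v$.
\item If it does not, Bob looks for a vertex $w$ of $y\cap S$ with at least $|S|/2$ neighbours in $S$. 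If he finds one, he sends it; if $w\in x$ the parties output $1$, and otherwise they restrict $S$ to the non-neighbours of $w$.
\item If neither party finds such a vertex, they output $0$. This is correct because a common vertex would have at least $|S|/2$ neighbours (as Alice found no low-degree vertex) and fewer than $|S|/2$ neighbours (as Bob found no high-degree vertex) simultaneously.
\end{itemize}
Each round roughly halves $|S|$, so there are $O(\log n)$ rounds of $O(\log n)$ bits each, giving $D(\operatorname{CIS}_H)=O(\log^2 n)$.

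The only delicate point is in Step~4, the off-by-one in the halving condition. I would handle it by using the threshold $|S|/2$ strictly on one side and non-strictly on the other, and by noting that $v$ and $w$ are themselves excluded from the restricted set, so that $|S|$ strictly decreases geometrically.
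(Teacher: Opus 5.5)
Your proposal is correct and takes the same route the paper relies on. The paper only cites this lemma, but your Steps 1--2 are exactly the row-sharing graph construction it uses in the proof of Corollary~\ref{cor:route_two_cis}, here padded with isolated vertices to reach $2^{\mathsf{UP}^{\mathsf{cc}}(F)}$ vertices; your Steps 3--4 are the standard pull-back of $0$-covers and the Yannakakis halving protocol behind the cited bound, with the threshold bookkeeping (strict for Alice, non-strict for Bob, $w$ excluded) handled correctly so that $|S|$ drops below $|S|/2$ each round.
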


\begin{proposition}[Explicit form of the BBGJK separation]
\label{prop:bbgjk_seven}
For all sufficiently large $m$, the chain of Lemmas~\ref{lem:bbgjk_eah}--\ref{lem:bbgjk_cis} produces an $n$-vertex graph $H$ with
\[
 \mathsf{coNP}^{\mathsf{cc}}(\operatorname{CIS}_H)
 =
 \Omega(\frac{\log^2 n}{(\log\log n)^{7}})
\]
and a total Boolean matrix $M$ with $\rank_{\R}(M)\le n+1$ satisfying Eq.~\eqref{eq:bbgjk_seven}.
\end{proposition}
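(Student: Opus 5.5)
Chaining the four lemmas gives two things to track: the parameters of the graph $H$, and a total matrix $M$ built from the complement of $\operatorname{CIS}_H$.

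\textbf{Step 1: parameters of the chain.} Fix a large $m$. Take $f:=\operatorname{Eah}_m$ and the input $z$ from Lemma~\ref{lem:bbgjk_eah}, with $N_0=2m^2$ variables. Since $\min\{\mathsf{C}_{\overline0},\mathsf{C}_{\overline1}\}\le m^2<N_0$, Lemma~\ref{lem:bbgjk_cs} applies. It yields a total $g$ on $N\le 3N_0^2\log^2N_0=O(m^4\log^2 m)$ variables, and $\log N=\Theta(\log m)$. It also gives
\[
 \mathsf{C}_0(g)\ge m^2/100,\qquad \mathsf{UC}_1(g)\le 600\,m\log m\cdot\log^2N_0=O(m\log^3 m).
\]
Lemma~\ref{lem:bbgjk_lift} with $b=\Theta(\log m)$ then gives $\mathsf{coNP}^{\mathsf{cc}}(F)=\Omega(m^2\log m)$ and $\mathsf{UP}^{\mathsf{cc}}(F)=O(m\log^4 m)$. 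Lemma~\ref{lem:bbgjk_cis} produces $H$ with
\[
 \log n=\mathsf{UP}^{\mathsf{cc}}(F)=O(m\log^4 m)\qquad\text{and}\qquad \mathsf{coNP}^{\mathsf{cc}}(\operatorname{CIS}_H)=\Omega(m^2\log m).
\]

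\textbf{Step 2: eliminate $m$.} First I need $\log n\to\infty$. This follows because every cover bound is at most $n$ in the relevant sense: $\mathsf{coNP}^{\mathsf{cc}}(\operatorname{CIS}_H)\le D(\operatorname{CIS}_H)=O(\log^2n)$ by Lemma~\ref{lem:bbgjk_cis}, so $\log n$ is unbounded. Set $L:=\log n\le Cm\log^4 m$. The function $x\mapsto x^2/(\log x)^7$ is eventually increasing, so $L^2/(\log L)^7=O(m^2\log^8 m/\log^7 m)=O(m^2\log m)$. This gives the first claim.

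This step is where the exponent $7$ comes from, and I expect it to be the main bookkeeping obstacle. The $\log^2N$ loss in the cheat sheet and the $\log m$ in $\mathsf C(\operatorname{Eah})$ give $\log^3$ in $\mathsf{UC}_1$. The lift multiplies this by $\log N$, giving $\log^4$. Squaring yields $\log^8$ against the $\log m$ gain, so the net loss is $\log^7$. I would double-check each logarithm factor, since miscounting one changes the exponent.

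\textbf{Step 3: the matrix $M$.} Let $M$ be the Boolean matrix of the negation of $\operatorname{CIS}_H$, with rows indexed by cliques and columns by independent sets. Then $J-M$ equals $\operatorname{CIS}_H$. Its $1$-entries are partitioned by the $n$ rectangles $\{x\ni v\}\times\{y\ni v\}$, one per vertex $v$, because a clique and an independent set share at most one vertex. Hence $\rank(J-M)\le n$ and $\rank_{\R}(M)\le n+1$.

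The $1$-entries of $M$ are the $0$-entries of $\operatorname{CIS}_H$. Any partition of them into $1$-rectangles of $M$ is in particular a $0$-cover of $\operatorname{CIS}_H$, so
\[
 \log p(M)\ge\log\operatorname{cov}_0(\operatorname{CIS}_H)\ge\mathsf{coNP}^{\mathsf{cc}}(\operatorname{CIS}_H)-1.
\]
Also $D(M)\ge\log p(M)$, since the $1$-leaves of a protocol partition the $1$-entries.

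Now use $\log\rank_{\R}(M)\le\log(n+1)$ together with the monotonicity from Step 2. This gives
\[
 \log p(M)=\Omega\Bigl(\frac{(\log\rank_{\R}M)^2}{(\log\log\rank_{\R}M)^7}\Bigr).
\]
The monotonicity step is valid only when $\rank_{\R}(M)$ is large. That holds because $\rank_{\R}(M)\ge 2^{D(M)/2}$-type bounds are not needed: $D(M)\le\rank_{\R}(M)+1$ and $D(M)\to\infty$. This proves Eq.~\eqref{eq:bbgjk_seven}.
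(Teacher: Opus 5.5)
Your proposal is correct and follows the paper's proof essentially step for step: the same parameter chain through Lemmas~\ref{lem:bbgjk_eah}--\ref{lem:bbgjk_cis}, the same use of $D(\operatorname{CIS}_H)=O(\log^2 n)$ to force $L\to\infty$, and the same rank-$n$ decomposition and $0$-cover argument for $M$. The only cosmetic difference is that you eliminate $m$ by the eventual monotonicity of $x^2/(\log x)^7$, whereas the paper proves the explicit sandwich $m\le L\le m^2$ and inverts.

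One point needs a justification you did not give: the lower half of $\log N=\Theta(\log m)$. The paper obtains it by padding $g$ to at least $m^2$ variables, but it also follows directly from $N\ge\mathsf{C}_0(g)\ge m^2/100$. It should be stated, because it is what makes $b=\Omega(\log m)$, giving exponent $7$ rather than $8$. It also supplies the ``sufficiently large $N$'' hypothesis of Lemma~\ref{lem:bbgjk_lift}.
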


\begin{proof}
All estimates below hold for all sufficiently large $m$, and all constants are absolute.

\emph{Step 1: the total function.}
Apply Lemma~\ref{lem:bbgjk_cs} to $f:=\operatorname{Eah}_m$ and $x:=z$ from Lemma~\ref{lem:bbgjk_eah}. The hypothesis holds because $\min\{\mathsf{C}_{\overline0},\mathsf{C}_{\overline1}\}\le m^2<2m^2=N_0$. Since $(2\log m+1)^2\le 5\log^2m$ for $m\ge 32$, the resulting total function $g$ satisfies
\[
 \mathsf{C}_0(g)\ge m^2/100,
 \qquad
 \mathsf{UC}_1(g)\le 3\cdot(200\,m\log m)\cdot 5\log^2 m=3000\,m\log^3m,
\]
Adjoining dummy input variables does not change $\mathsf{C}_0(g)$ or $\mathsf{UC}_1(g)$. We therefore pad $g$, if necessary, and retain the notation $g$ and $n_g$, so that
\[
 m^2\le n_g\le 3N_0^2\log^2N_0\le 60\,m^4\log^2m.
\]
Consequently,
\[
 2\log m\le\log n_g\le 5\log m.
\]

\emph{Step 2: lifting.}
Apply Lemma~\ref{lem:bbgjk_lift} to $g$ with $N:=n_g$, so that $200\log m\le b\le 600\log m$. The lifted problem $F$ satisfies
\[
 \mathsf{coNP}^{\mathsf{cc}}(F)
 \ge
 \beta_0\cdot\frac{m^2}{100}\cdot 200\log m
 =
 2\beta_0\,m^2\log m
\]
and, using $\lceil\log n_g\rceil+2b\le 1206\log m$,
\[
 L:=\mathsf{UP}^{\mathsf{cc}}(F)
 \le
 3000\,m\log^3m\cdot 1206\log m
 \le
 4\cdot10^{6}\,m\log^4m.
\]

\emph{Step 3: the graph.}
Apply Lemma~\ref{lem:bbgjk_cis} to $F$, obtaining $H$ on $n=2^{L}$ vertices with
\[
 \mathsf{coNP}^{\mathsf{cc}}(\operatorname{CIS}_H)
 \ge
 \mathsf{coNP}^{\mathsf{cc}}(F)
 \ge
 2\beta_0\,m^2\log m.
\]
We claim $m\le L\le m^2$. The upper bound is immediate from Step~2 since $4\cdot10^{6}\log^4m\le m$. For the lower bound, the final claim of Lemma~\ref{lem:bbgjk_cis} gives an absolute $\kappa$ with
$\mathsf{coNP}^{\mathsf{cc}}(\operatorname{CIS}_H)\le D(\operatorname{CIS}_H)\le\kappa L^2$, so $\kappa L^2\ge 2\beta_0 m^2\log m$ and hence $L\ge m$. Consequently
\[
 \log m\le\log L\le 2\log m.
\]
Inverting Step~2 with $\log^4m\le\log^4L$ gives $m\ge L/(4\cdot10^{6}\log^4L)$, and therefore
\[
 \mathsf{coNP}^{\mathsf{cc}}(\operatorname{CIS}_H)
 \ge
 2\beta_0\cdot\frac{L^2}{(4\cdot10^{6})^2\log^8L}\cdot\frac{\log L}{2}
 =
 \Omega(\frac{L^2}{\log^7L})
 =
 \Omega(\frac{\log^2 n}{(\log\log n)^{7}}).
\]

\emph{Step 4: the matrix.}
Index rows by the cliques and columns by the independent sets of $H$, and let $M[x,y]:=1$ if $x\cap y=\varnothing$ and $M[x,y]:=0$ otherwise. Since $|x\cap y|=\sum_{v\in V(H)}\mathbf{1}[v\in x]\,\mathbf{1}[v\in y]$, the matrix of the values $|x\cap y|$ is a sum of $n$ rank-one matrices, and $M$ is the all-ones matrix minus it; hence $r:=\rank_{\R}(M)\le n+1$ and $\log r\le L+1$. The $1$-entries of $M$ are exactly the $0$-inputs of $\operatorname{CIS}_H$, so a partition of them into $p(M)$ disjoint $1$-monochromatic rectangles is in particular a $0$-cover of $\operatorname{CIS}_H$, giving
\[
 \log p(M)\ge\mathsf{coNP}^{\mathsf{cc}}(\operatorname{CIS}_H)-1.
\]
The $1$-leaves of an optimal deterministic protocol for $M$ partition the $1$-entries into at most $2^{D(M)}$ disjoint $1$-monochromatic rectangles, so $D(M)\ge\log p(M)$. Finally, the elementary bound of Section~\ref{sec:introduction} gives $D(M)\le r+1$, so $r\to\infty$, and the function $\psi(t):=t^2/(\log t)^7$ is increasing for all sufficiently large $t$. Since $\log r\le L+1$,
\[
 D(M)
 \ge
 \log p(M)
 \ge
 \Omega(\psi(L))
 \ge
 \Omega(\psi(\log r))
 =
 \Omega(\frac{(\log r)^2}{(\log\log r)^{7}}),
\]
which is Eq.~\eqref{eq:bbgjk_seven}.
\end{proof}

\paragraph{The ledger.}
Writing every logarithmic loss in units of $\log m=\Theta(\log\log n)$, and noting that a factor on the $\mathsf{UP}$ side enters $L=\log n$ and is therefore squared in the final ratio while a factor multiplying the $\mathsf{coNP}$ lower bound counts once negatively, the exponent $7$ decomposes as follows. The EAH estimate $\mathsf{C}(\operatorname{Eah}_m)=O(m\log m)$ contributes $+2$. The factor $\log^2N$ in the cheat-sheet estimate for $\mathsf{UC}_1(g)$ contributes $+4$. The factor $\lceil\log N\rceil+2b$ in the lifting upper bound contributes $+2$, whereas the factor $b$ in the lifting lower bound contributes $-1$. The embedding into $\operatorname{CIS}$ and the passage to the matrix introduce no further logarithmic loss. In total, $2+4+2-1=7$.

This ledger records only the losses in the displayed bounds used above; it does not assert that the exponent $7$ is optimal for the BBGJK framework. If the factor $b$ in the lower bound of Lemma~\ref{lem:bbgjk_lift} is omitted, the same calculation gives the weaker exponent $8$. By contrast, the Booleanized GPW route gives the exponent $2$ in Eq.~\eqref{eq:current_lower}, and the alphabet-preserving lift of Theorem~\ref{thm:route_two_main} gives the exponent $1$.

\end{document}